\documentclass[12pt, a4paper]{article}

\usepackage[utf8]{inputenc}
\usepackage[T1]{fontenc}
\usepackage{amsmath, amssymb, amsthm}
\usepackage{mathtools}
\usepackage{graphicx}
\usepackage{booktabs}
\usepackage{hyperref}
\usepackage[margin=2.5cm]{geometry}
\usepackage{natbib}
\usepackage{algorithm}
\usepackage{algpseudocode}
\usepackage{float}
\usepackage{caption}
\usepackage{subcaption}

\newtheorem{theorem}{Theorem}
\newtheorem{proposition}[theorem]{Proposition}
\newtheorem{lemma}[theorem]{Lemma}
\newtheorem{corollary}[theorem]{Corollary}

\newtheorem{remark}{Remark}

\newcommand{\E}{\mathbb{E}}
\newcommand{\Var}{\mathrm{Var}}
\newcommand{\Cov}{\mathrm{Cov}}
\newcommand{\R}{\mathbb{R}}
\newcommand{\dd}{\mathrm{d}}
\DeclareMathOperator*{\argmax}{arg\,max}

\title{Optimal Hedge Ratio for Delta-Neutral Liquidity Provision \\
under Liquidation Constraints}

\author{Atsushi Hane\thanks{Independent researcher. Email: \texttt{atsushihane@gmail.com}}}

\date{March 2026}

\begin{document}
\maketitle

\begin{abstract}
We study the problem of optimally hedging the price exposure of
liquidity positions in constant-product automated market makers (AMMs) when the hedge is
funded by collateralized borrowing.  A liquidity provider (LP) who
borrows tokens to construct a delta-neutral position faces a
trade-off: higher hedge ratios reduce price exposure but increase
liquidation risk through tighter collateral utilization.  We model
token prices as correlated geometric Brownian motions and derive the
hedge ratio $h$ that maximizes risk-adjusted return subject to a
liquidation-probability constraint expressed via a first-passage-time
bound.  The unconstrained optimum $h^{*}$ admits a closed-form
expression, but at $h^{*}$ the liquidation probability is
prohibitively high.  The practical optimum
$h^{**} = \min(h^{*},\,\bar{h}(\alpha))$ is determined by the
binding liquidation constraint $\bar{h}(\alpha)$, which we evaluate
analytically via the first-passage-time formula and confirm with
Monte Carlo simulation.  Simulations calibrated to on-chain data validate
the analytical results, demonstrate robustness across realistic
parameter ranges, and show that the optimal hedge ratio
lies between 50\% and 70\% for typical DeFi lending conditions.
Practical guidelines for rebalancing frequency and position sizing are
also provided.

\medskip
\noindent\textbf{Keywords:}
decentralized finance, liquidity provision, impermanent loss,
delta hedging, liquidation risk, first passage time

\medskip
\noindent\textbf{JEL Classification:} G11, G23, C63
\end{abstract}

\section{Introduction}\label{sec:intro}


Automated market makers (AMMs) such as Uniswap~v2 \citep{uniswap2020}
have become a cornerstone of decentralized finance
(DeFi; see \citealt{xu2023} for a comprehensive survey), enabling
permissionless token exchange via constant-product liquidity pools.
Liquidity providers earn trading fees but are exposed to
\emph{impermanent loss} (IL)---the shortfall relative to a simple
buy-and-hold portfolio when token prices diverge.%
\footnote{\citet{milionis2022} argue that the relevant LP cost
metric is \emph{fees\,--\,LVR} rather than IL\@.  Our framework
hedges pure price exposure, not non-linear costs such as IL or
LVR; see Remark~3.}

A natural mitigation is to \emph{borrow} the constituent tokens from a
lending protocol and sell them, creating a short position that offsets
the LP's long exposure.  This ``delta-neutral'' approach is widely
discussed in practitioner forums, yet a rigorous treatment of the
resulting \emph{liquidation risk} is absent from the literature.

The contribution of this paper is threefold:
\begin{enumerate}
    \item We formalize the LP hedging problem as a stochastic
          optimization over the hedge ratio~$h$, incorporating
          borrowing costs, reward income, and a hard
          loan-to-value (LTV) constraint.
    \item We derive the unconstrained optimal hedge ratio
          $h^{*}$ in closed form and show that the practical
          optimum $h^{**}$ is determined by the binding
          liquidation constraint, whose threshold $\bar{h}(\alpha)$
          is characterized via a first-passage-time approximation.
    \item We validate the model with Monte Carlo simulation
          calibrated to on-chain pool and lending data, and provide
          actionable rebalancing rules.
\end{enumerate}

The remainder of the paper is organized as follows.
Section~\ref{sec:related} reviews related work.
Section~\ref{sec:model} presents the model.
Section~\ref{sec:optimal} derives the optimal hedge ratio.
Section~\ref{sec:liquidation} analyzes liquidation probability using
first-passage-time theory.
Section~\ref{sec:simulation} describes the Monte Carlo study.
Section~\ref{sec:results} reports the results.
Section~\ref{sec:discussion} discusses practical implications.
Section~\ref{sec:conclusion} concludes.

\section{Related Work}\label{sec:related}

\paragraph{Impermanent loss.}
The concept of IL in constant-product AMMs was first described
informally by \citet{pintail2019} and formalized in the
Uniswap~v2 whitepaper \citep{uniswap2020}.  \citet{aigner2021}
quantified IL under geometric Brownian motion and derived
closed-form expressions for expected IL as a function of
volatility.  \citet{clark2020} provided a similar analysis with
emphasis on replication strategies.  \citet{loesch2021} extended
the analysis to concentrated-liquidity AMMs (Uniswap~v3).
\citet{angeris2020} established rigorous theoretical foundations for
constant-product markets, proving that AMM prices must closely track
external market prices under no-arbitrage conditions.
\citet{milionis2022} introduce \emph{loss-versus-rebalancing}
(LVR), a distinct, path-dependent measure of the adverse-selection
cost that LPs bear by offering a continuous trading option to
arbitrageurs; while related to IL, LVR is monotonically increasing
and captures costs that IL---being path-independent---does not.  \citet{evans2022} derive optimal
fee structures to compensate LPs for adverse selection, while
\citet{heimbach2022} provide empirical evidence that
maximal-extractable-value (MEV) activity amplifies LP losses
beyond what IL alone predicts.

\paragraph{Delta-neutral strategies.}
\citet{khakhar2022} propose a delta-hedging algorithm for LP
positions using options on the underlying tokens, but do not derive
an optimal hedge ratio or model liquidation risk from the hedging
instrument itself.  \citet{lipton2024} develop a unified framework
for hedging LP price exposure via static option replication and dynamic
delta-hedging, covering both Uniswap~v2 and v3; their approach
assumes access to liquid options markets, which remain largely
unavailable in DeFi.
\citet{capponi2021} study optimal portfolio allocation in
decentralized exchanges, and \citet{fan2022} analyze differential
liquidity provision in concentrated-liquidity AMMs.  To the best of
our knowledge, no prior work explicitly models the hedge ratio as a
continuous decision variable subject to a liquidation constraint
arising from collateralized borrowing.

\paragraph{Liquidation risk in DeFi lending.}
\citet{qin2021} and \citet{perez2021} study liquidation cascades
in lending protocols from a systemic-risk perspective.
\citet{gudgeon2020} analyze the design of DeFi lending protocols
and the role of liquidation mechanisms.
\citet{bartoletti2021} provide a systematic classification of DeFi
protocols and their composability risks, including cascading
liquidations across lending and AMM layers.  Our work differs by
focusing on the \emph{individual} LP's problem of choosing a hedge
ratio that balances variance reduction against liquidation probability.

\paragraph{First passage time.}
The first-passage-time theory for Brownian motion is classical
\citep{karatzas1991}.  We apply it to the LTV process via a
moment-matching approximation for the sum of correlated geometric
Brownian motions, following the approach of \citet{milevsky1998}.

\section{Model}\label{sec:model}

\subsection{Price dynamics}

Let $S_t^A$ and $S_t^B$ denote the prices of tokens $A$ and $B$ at
time~$t$, measured in a num\'eraire (e.g., USDC).  We model prices as
correlated geometric Brownian motions:
\begin{align}
    \frac{\dd S_t^A}{S_t^A} &= \mu_A \, \dd t
        + \sigma_A \, \dd W_t^A, \label{eq:sde_a} \\
    \frac{\dd S_t^B}{S_t^B} &= \mu_B \, \dd t
        + \sigma_B \, \dd W_t^B, \label{eq:sde_b}
\end{align}
where $\dd W_t^A \, \dd W_t^B = \rho \, \dd t$ and
$\rho \in (-1, 1)$ is the instantaneous correlation.

\subsection{Constant-product AMM}

A constant-product pool maintains the invariant
$x_t \cdot y_t = L^2$, where $x_t$ and $y_t$ are the reserves of
tokens $A$ and $B$, and $L$ is the liquidity parameter.  An LP who
deposits value $V_0$ at time~0 holds a claim whose \emph{mark-to-market
value} (excluding accumulated trading fees and farming rewards) at
time~$t$ is
\begin{equation}\label{eq:lp_value}
    V_t^{\text{LP}} = V_0 \sqrt{\frac{S_t^A}{S_0^A}
        \cdot \frac{S_t^B}{S_0^B}}.
\end{equation}
Fee and reward income is accounted for separately via the cumulative
reward term $R_t$ in \eqref{eq:portfolio}.

\subsection{Hedging via borrowing}

The LP deposits collateral $C$ (in stablecoin) into a lending protocol
and borrows fractions of tokens $A$ and $B$.  Define the
\emph{hedge ratio} $h \in [0, 1]$ as the fraction of LP token
exposure that is offset by the borrowed position.  At inception the LP
borrows
\begin{equation}\label{eq:borrow}
    D_0^A = h \cdot \frac{V_0}{2 S_0^A}, \qquad
    D_0^B = h \cdot \frac{V_0}{2 S_0^B}
\end{equation}
tokens $A$ and $B$, and uses them, together with spot-purchased tokens,
to fund the LP position.  The borrow rates are $r_A$ and $r_B$ per annum.

\subsection{Portfolio value}

The LP funds the position with a mix of equity and borrowed tokens.
Equity capital deployed is $C + (1-h)V_0$: the collateral~$C$ (in
stablecoin, deposited in the lending protocol) plus $(1-h)V_0$ used
to purchase the unhedged fraction of LP tokens.  The borrowed
tokens go directly into the LP alongside the spot-purchased ones.

The net portfolio value at time~$t$ (before rebalancing) is
\begin{equation}\label{eq:portfolio}
    \Pi_t = \underbrace{V_t^{\text{LP}} + R_t}_{\text{LP + rewards}}
        + \underbrace{C(1 + r_f t)}_{\text{collateral + yield}}
        - \underbrace{\frac{h V_0}{2}
            \!\left(\frac{S_t^A}{S_0^A}
            + \frac{S_t^B}{S_0^B}\right)}_{\text{debt value}}
        - \underbrace{\frac{h V_0}{2}(r_A + r_B)\,t}_{\text{borrow cost}},
\end{equation}
where $R_t$ denotes cumulative LP rewards (fees + incentives),
growing linearly as $R_t = (R/V_0)\,V_0\,t$ with
$R/V_0$ the annualized reward-to-value ratio (Table~\ref{tab:calibration}),
and $r_f$ is the stablecoin supply rate earned on collateral.

\subsection{Loan-to-value constraint}

The lending protocol liquidates the position if the loan-to-value
ratio exceeds a threshold $\ell_{\max}$:
\begin{equation}\label{eq:ltv}
    \text{LTV}_t = \frac{\frac{h V_0}{2}
        \bigl(\frac{S_t^A}{S_0^A} + \frac{S_t^B}{S_0^B}\bigr)}
        {C} \leq \ell_{\max}.
\end{equation}
The numerator (debt in USD) grows when token prices rise, while the
denominator (stablecoin collateral) is fixed.  Note that
$\text{LTV}_0 = hV_0/C$, so higher hedge ratios start with higher
leverage.

\begin{remark}[Accrued interest]\label{rem:accrued}
Equation~\eqref{eq:ltv} omits accrued borrow interest from the
numerator.  Including it would add a term
$\frac{h V_0}{2}(r_A + r_B)\,t$ to the debt, raising LTV by
$\frac{h V_0 (r_A + r_B)\,t}{2C}$.  At the calibrated values
($h = 0.60$, $r_A + r_B = 0.18$, $T = 0.25$, $C/V_0 = 2.0$),
this amounts to ${\approx}\,0.7$\,pp of LTV---negligible relative
to the 50\,pp buffer between initial LTV (30\%) and
$\ell_{\max}$ (80\%).  The Monte Carlo simulation
(Section~\ref{sec:simulation}) does incorporate accrued interest
in the daily LTV check, so all numerical results account for
this effect; only the analytical expressions in
Sections~\ref{sec:optimal} and~\ref{sec:liquidation} use the
simplified form~\eqref{eq:ltv}.
\end{remark}

\section{Optimal Hedge Ratio}\label{sec:optimal}

Throughout this section we set $\mu_A = \mu_B = 0$ in
\eqref{eq:sde_a}--\eqref{eq:sde_b}.  This is \emph{not} an
appeal to risk-neutral pricing (no derivative is being priced);
rather, it reflects the assumption that the LP holds no
directional view on token prices and regards reward income as the
sole source of expected profit.  The zero-drift assumption is analytically convenient and
conservative in bull markets: any positive drift in token prices
would increase debt value and thus worsen liquidation risk, making
our estimates an upper bound on feasible hedge ratios.  Conversely,
in bear markets ($\mu < 0$), falling prices reduce debt value and
improve LTV, so the short leg generates profit and the liquidation
constraint relaxes.  The optimal $h^{**}$ would then increase,
meaning the zero-drift estimates are conservative in both
directions: they understate the feasible hedge ratio in bear
markets and overstate it in bull markets.

\subsection{Objective function}

We adopt the Sharpe ratio as the objective for analytical
tractability and comparability with the traditional hedging
literature.  While the return distribution exhibits a mass
point at the liquidation loss (visible in
Table~\ref{tab:results_hedge} as the discontinuous VaR jump
between $h = 0.70$ and $h = 0.80$), the 5\% VaR reported
alongside confirms that the Sharpe-optimal $h^{**}$ also
performs well under tail-risk criteria.  A formal CVaR
optimization is left for future work.

We maximize the Sharpe ratio of the dollar profit and loss
$\Pi_T - \Pi_0$ over the hedge ratio~$h$:
\begin{equation}\label{eq:objective}
    h^{*} = \argmax_{h \in [0,1]} \;
        \frac{\E[\Pi_T - \Pi_0]}
             {\sqrt{\Var(\Pi_T)}}
    \quad \text{s.t.} \quad
    \Pr\!\bigl(\exists\, t \in [0,T]:
        \text{LTV}_t > \ell_{\max}\bigr) \leq \alpha.
\end{equation}
We first solve the unconstrained problem (ignoring the liquidation
constraint) and then incorporate the constraint in
Section~\ref{sec:liquidation}.

\begin{remark}[Dollar vs.\ ROE-based Sharpe ratio]\label{rem:dollar_roe}
The analytical derivation maximizes the Sharpe ratio of dollar
P\&L $\Pi_T - \Pi_0$.  The Monte Carlo analysis
(Section~\ref{sec:simulation}) instead reports the Sharpe ratio
of ROE $:= (\Pi_T - \Pi_0)/\Pi_0$, where
$\Pi_0 = C + (1-h)V_0$ (Eq.~\ref{eq:equity}).  Since $\Pi_0$
depends on~$h$, the two objectives are not identical.  However,
for $C/V_0 = 2.0$ the equity base varies by at most a factor
of $3.0/2.0 = 1.5$ across $h \in [0,1]$, and both criteria
agree on the location of $h^{**}$ because the practical optimum
is determined by the binding liquidation constraint
(Remark~\ref{rem:interior_from_constraint}), not by the
unconstrained FOC.
\end{remark}

\subsection{Lognormal moment lemma}

The following identity is used repeatedly.

\begin{lemma}\label{lem:mgf}
Let $X_T = \ln(S_T^A/S_0^A)$ and $Y_T = \ln(S_T^B/S_0^B)$.
Under zero-drift GBM ($\mu_A = \mu_B = 0$), the joint
moment generating function is
\begin{equation}\label{eq:mgf}
    \E\!\left[e^{a X_T + b Y_T}\right]
    = \exp\!\left(
        \frac{a(a-1)\sigma_A^2 + b(b-1)\sigma_B^2
              + 2ab\,\rho\,\sigma_A\sigma_B}{2}\,T
    \right)
\end{equation}
for all $a, b \in \R$.
\end{lemma}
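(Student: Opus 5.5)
The plan is to solve the zero-drift SDEs explicitly, identify the joint law of $(X_T,Y_T)$ as bivariate Gaussian, and then evaluate the moment generating function of a Gaussian.

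\textbf{Step 1: explicit solution.} Apply Itô's lemma to $\ln S_t^A$ under \eqref{eq:sde_a} with $\mu_A=0$. This gives $\dd \ln S_t^A = -\tfrac12\sigma_A^2\,\dd t + \sigma_A\,\dd W_t^A$, so
\[
X_T = -\tfrac12\sigma_A^2 T + \sigma_A W_T^A,
\]
and likewise $Y_T = -\tfrac12\sigma_B^2 T + \sigma_B W_T^B$.

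\textbf{Step 2: joint Gaussianity.} The pair $(W^A,W^B)$ has quadratic covariation $\rho\,t$, so it can be written as $W^B = \rho W^A + \sqrt{1-\rho^2}\,W^\perp$ with $W^\perp$ an independent Brownian motion. Hence $(W_T^A,W_T^B)$ is bivariate normal with mean zero, variances $T$, and covariance $\rho T$. It follows that $Z := aX_T + bY_T$ is Gaussian for any fixed $a,b\in\R$, with
\[
\E[Z] = -\tfrac12(a\sigma_A^2 + b\sigma_B^2)T,
\qquad
\Var(Z) = (a^2\sigma_A^2 + b^2\sigma_B^2 + 2ab\rho\sigma_A\sigma_B)T .
\]

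\textbf{Step 3: Gaussian MGF.} Use $\E[e^Z]=\exp(\E Z + \tfrac12\Var Z)$, which is finite for every real $a,b$. Collecting terms, the exponent is
\[
\tfrac{T}{2}\bigl[(a^2-a)\sigma_A^2 + (b^2-b)\sigma_B^2 + 2ab\rho\sigma_A\sigma_B\bigr],
\]
which is exactly \eqref{eq:mgf}.

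There is no real obstacle. The only point needing care is Step 2, where one must justify that correlated Brownian motions at a fixed time are jointly (not merely marginally) Gaussian. The Cholesky decomposition above settles this, since $(W_T^A,W_T^B)$ is then a linear image of two independent Gaussians.
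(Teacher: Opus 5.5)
Your proof is correct and follows essentially the same route as the paper: show that $aX_T+bY_T$ is Gaussian with mean $-\tfrac12(a\sigma_A^2+b\sigma_B^2)T$ and variance $(a^2\sigma_A^2+b^2\sigma_B^2+2ab\rho\sigma_A\sigma_B)T$, then apply the Gaussian MGF. Your It\^o step and the Cholesky argument for joint (not merely marginal) normality spell out what the paper takes for granted; strictly, independence of $W^\perp$ from $W^A$ follows from L\'evy's characterization.
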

\begin{proof}
Under zero-drift GBM, $X_T \sim \mathcal{N}(-\tfrac{1}{2}\sigma_A^2 T,\;
\sigma_A^2 T)$ and $Y_T \sim \mathcal{N}(-\tfrac{1}{2}\sigma_B^2 T,\;
\sigma_B^2 T)$ with $\Cov(X_T, Y_T) = \rho\sigma_A\sigma_B T$.
The linear combination $aX_T + bY_T$ is normal with mean
$-(a\sigma_A^2 + b\sigma_B^2)T/2$ and variance
$(a^2\sigma_A^2 + b^2\sigma_B^2 + 2ab\rho\sigma_A\sigma_B)T$.
Applying the lognormal MGF $\E[e^Z] = e^{\mu_Z + \sigma_Z^2/2}$
and simplifying yields \eqref{eq:mgf}.
\end{proof}

\noindent
Table~\ref{tab:moments} collects the moments used below, each
obtained by substituting the appropriate $(a,b)$ into
Lemma~\ref{lem:mgf}.

\begin{table}[H]
\centering
\caption{Key lognormal moments under zero-drift GBM
($\mu_A = \mu_B = 0$, $p_i := S_T^i/S_0^i$).}
\label{tab:moments}
\begin{tabular}{@{}llll@{}}
\toprule
Quantity & $(a,b)$ & Expression \\
\midrule
$\E[p_A]$ & $(1,0)$ & $1$ \\
$\E[p_B]$ & $(0,1)$ & $1$ \\
$\E[\sqrt{p_A p_B}]$ & $(\tfrac{1}{2},\tfrac{1}{2})$
    & $e^{-\phi T}$, \;
      $\phi := \dfrac{\sigma_A^2 + \sigma_B^2
                       - 2\rho\sigma_A\sigma_B}{8}$ \\[6pt]
$\E[p_A p_B]$ & $(1,1)$ & $e^{\rho\sigma_A\sigma_B\, T}$ \\
$\E[p_A^2]$ & $(2,0)$ & $e^{\sigma_A^2 T}$ \\
$\E[p_B^2]$ & $(0,2)$ & $e^{\sigma_B^2 T}$ \\
$\E[p_A^{3/2}\,p_B^{1/2}]$ & $(\tfrac{3}{2},\tfrac{1}{2})$
    & $e^{(3\sigma_A^2 - \sigma_B^2
           + 6\rho\sigma_A\sigma_B)\,T/8}$ \\[4pt]
$\E[p_A^{1/2}\,p_B^{3/2}]$ & $(\tfrac{1}{2},\tfrac{3}{2})$
    & $e^{(-\sigma_A^2 + 3\sigma_B^2
            + 6\rho\sigma_A\sigma_B)\,T/8}$ \\
\bottomrule
\end{tabular}
\end{table}

\subsection{Expected return}\label{sec:expected}

Write $p_i = S_T^i / S_0^i$ for the price relatives.  The initial
equity is
\begin{equation}\label{eq:equity}
    \Pi_0 = C + (1-h)\,V_0,
\end{equation}
since the LP deposits $C$ as collateral and spends $(1-h)V_0$ to
purchase the unhedged fraction of LP tokens.  From
\eqref{eq:portfolio} the P\&L is
\begin{equation}\label{eq:pnl}
    \Pi_T - \Pi_0
    = \underbrace{V_0(\sqrt{p_A p_B} - 1)}_{\Delta V^{\text{LP}}}
      + R T
      + C r_f T
      - \frac{h V_0}{2}(p_A + p_B - 2)
      - \frac{h V_0}{2}(r_A + r_B)\,T,
\end{equation}
where $R$ is the reward rate in dollars per year (so that $RT$ is total
reward income over the horizon) and $r_f$ is the stablecoin supply
rate (per annum).  In Table~\ref{tab:calibration}, $R/V_0 = 0.54$
means annual reward income equals 54\% of LP value.

\begin{proposition}[Expected P\&L]\label{prop:expected_return}
Under zero-drift GBM,
\begin{equation}\label{eq:expected_pnl}
    \E[\Pi_T - \Pi_0]
    = V_0\bigl(e^{-\phi T} - 1\bigr)
      + R T + C r_f T
      - \frac{h V_0}{2}(r_A + r_B)\,T,
\end{equation}
where $\phi = (\sigma_A^2 + \sigma_B^2 - 2\rho\sigma_A\sigma_B)/8$
governs the expected rate of LP value decay due to price divergence.
\end{proposition}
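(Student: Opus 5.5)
The plan is to take expectations term by term in the P\&L decomposition \eqref{eq:pnl}, using linearity of expectation. Every term other than the LP value change and the debt revaluation is deterministic: $RT$, $Cr_fT$ and $\frac{hV_0}{2}(r_A+r_B)T$ involve only constants and the fixed horizon. This holds because \eqref{eq:pnl} already models the reward and borrow-cost accruals as linear in $T$ on the fixed initial notional. These three terms therefore pass through the expectation unchanged, and only two random terms need to be computed.

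For the debt revaluation term $-\frac{hV_0}{2}(p_A+p_B-2)$, I will invoke Lemma~\ref{lem:mgf} with $(a,b)=(1,0)$ and $(0,1)$. The exponent $a(a-1)\sigma_A^2+b(b-1)\sigma_B^2+2ab\rho\sigma_A\sigma_B$ vanishes in both cases, which gives $\E[p_A]=\E[p_B]=1$, as recorded in Table~\ref{tab:moments}. Hence the short leg has zero expected P\&L under zero drift. This is exactly why $h$ enters \eqref{eq:expected_pnl} only through the borrow-cost term.

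For the LP term $V_0(\sqrt{p_Ap_B}-1)$, I apply Lemma~\ref{lem:mgf} with $a=b=\tfrac12$. The exponent is
\[
\frac{\tfrac12(-\tfrac12)\sigma_A^2+\tfrac12(-\tfrac12)\sigma_B^2+2\cdot\tfrac14\rho\sigma_A\sigma_B}{2}\,T
=-\frac{\sigma_A^2+\sigma_B^2-2\rho\sigma_A\sigma_B}{8}\,T=-\phi T,
\]
so $\E[\sqrt{p_Ap_B}]=e^{-\phi T}$. Substituting this and the previous step into the expectation of \eqref{eq:pnl} yields \eqref{eq:expected_pnl} directly.

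The only step that needs care is the arithmetic in the $(\tfrac12,\tfrac12)$ exponent, specifically getting the factor $1/8$ and the sign of the cross term right; the rest is bookkeeping. As a sanity check, $\phi=\Var(X_T-Y_T)/(8T)\ge 0$, so $e^{-\phi T}\le 1$. This is consistent with Jensen's inequality applied to the concave geometric mean and matches the interpretation of $\phi$ as the rate of LP value decay from price divergence.
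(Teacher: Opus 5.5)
Your proposal is correct and follows the paper's proof. You take expectations term by term in \eqref{eq:pnl}, remove the debt-revaluation term using $\E[p_A]=\E[p_B]=1$, and substitute $\E[\sqrt{p_Ap_B}]=e^{-\phi T}$; the only difference is that you carry out the $(\tfrac12,\tfrac12)$ exponent computation from Lemma~\ref{lem:mgf} explicitly, where the paper reads it off Table~\ref{tab:moments}.
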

\begin{proof}
Apply $\E[p_A] = \E[p_B] = 1$ (Table~\ref{tab:moments}) to
\eqref{eq:pnl}.  The hedge-related price term
$\E[\tfrac{h V_0}{2}(p_A + p_B - 2)] = 0$ vanishes, leaving only
the deterministic borrow cost $\tfrac{h V_0}{2}(r_A + r_B)T$.
\end{proof}

\begin{remark}
The expected LP value change $V_0(e^{-\phi T} - 1)$---the
structural cost of price divergence, commonly called impermanent
loss---is independent of~$h$.  The hedge ratio affects expected
return only through borrow costs, which are linear in~$h$.
The benefit of hedging is entirely through variance reduction:
the borrowing-based hedge targets price exposure, not the
non-linear divergence cost itself.
\end{remark}

\subsection{Variance of return}\label{sec:variance}

The only stochastic terms in \eqref{eq:pnl} are
$G := \sqrt{p_A p_B}$ (the LP value factor) and
$A := \tfrac{1}{2}(p_A + p_B)$ (the arithmetic mean of price
relatives).

\begin{proposition}[Variance decomposition]\label{prop:variance}
\begin{equation}\label{eq:var_decomp}
    \Var(\Pi_T) = V_0^2
        \bigl[v_{GG} + h^2\,v_{AA} - 2h\,v_{GA}\bigr],
\end{equation}
where
\begin{align}
    v_{GG} &:= \Var(G)
            = e^{\rho\sigma_A\sigma_B\,T} - e^{-2\phi T},
            \label{eq:vgg}\\
    v_{AA} &:= \Var(A)
            = \tfrac{1}{4}\bigl[
                e^{\sigma_A^2 T} + e^{\sigma_B^2 T}
                + 2\,e^{\rho\sigma_A\sigma_B\,T} - 4
            \bigr], \label{eq:vaa}\\
    v_{GA} &:= \Cov(G, A)
            = \tfrac{1}{2}\bigl[
                e^{(3\sigma_A^2 - \sigma_B^2
                    + 6\rho\sigma_A\sigma_B)\,T/8}
              + e^{(-\sigma_A^2 + 3\sigma_B^2
                    + 6\rho\sigma_A\sigma_B)\,T/8}
              - 2\,e^{-\phi T}
            \bigr]. \label{eq:vga}
\end{align}
\end{proposition}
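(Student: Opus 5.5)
From \eqref{eq:pnl}, $\Pi_T$ is an affine function of the two random variables $G=\sqrt{p_Ap_B}$ and $A=\tfrac12(p_A+p_B)$:
\[
\Pi_T = \text{const} + V_0 G - hV_0 A,
\]
where the constant collects $RT$, $Cr_fT$, the borrow cost and the deterministic parts. All of these deterministic terms drop out of the variance. By bilinearity,
\[
\Var(\Pi_T)=V_0^2\bigl[\Var(G)+h^2\Var(A)-2h\Cov(G,A)\bigr],
\]
which is exactly \eqref{eq:var_decomp}. It therefore remains to evaluate the three second moments, and each one reduces to moments listed in Table~\ref{tab:moments}, all obtained from Lemma~\ref{lem:mgf}.

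\textbf{The term $v_{GG}$.} Write $\Var(G)=\E[p_Ap_B]-(\E[\sqrt{p_Ap_B}])^2$. The table gives $\E[p_Ap_B]=e^{\rho\sigma_A\sigma_B T}$ (case $(a,b)=(1,1)$) and $\E[\sqrt{p_Ap_B}]=e^{-\phi T}$ (case $(\tfrac12,\tfrac12)$). This yields \eqref{eq:vgg} immediately.

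\textbf{The term $v_{AA}$.} Expand
\[
\Var(A)=\tfrac14\bigl(\E[p_A^2]+\E[p_B^2]+2\E[p_Ap_B]\bigr)-(\E A)^2 .
\]
Since $\E A=1$, the subtracted term is $1=\tfrac14\cdot 4$. Substituting $e^{\sigma_A^2T}$, $e^{\sigma_B^2T}$ and $e^{\rho\sigma_A\sigma_BT}$ gives \eqref{eq:vaa}.

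\textbf{The term $v_{GA}$.} Write $\Cov(G,A)=\E[GA]-\E[G]\,\E[A]$. Here
\[
GA=\tfrac12\bigl(p_A^{3/2}p_B^{1/2}+p_A^{1/2}p_B^{3/2}\bigr),
\]
whose expectations are the $(\tfrac32,\tfrac12)$ and $(\tfrac12,\tfrac32)$ entries of the table. The subtracted product is $e^{-\phi T}\cdot 1=\tfrac12\cdot 2e^{-\phi T}$, which gives \eqref{eq:vga}.

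The main obstacle is purely bookkeeping: confirming the exponents of the table entries from Lemma~\ref{lem:mgf}, since these feed every term above.

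For $(\tfrac32,\tfrac12)$, Lemma~\ref{lem:mgf} gives an exponent of
\[
\tfrac12\Bigl[\tfrac34\sigma_A^2-\tfrac14\sigma_B^2+\tfrac32\rho\sigma_A\sigma_B\Bigr]T=\frac{(3\sigma_A^2-\sigma_B^2+6\rho\sigma_A\sigma_B)T}{8},
\]
and $(\tfrac12,\tfrac32)$ follows by symmetry.

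For $(\tfrac12,\tfrac12)$, the exponent is
\[
\tfrac12\Bigl[-\tfrac14\sigma_A^2-\tfrac14\sigma_B^2+\tfrac12\rho\sigma_A\sigma_B\Bigr]T=-\phi T,
\]
which matches the definition of $\phi$.
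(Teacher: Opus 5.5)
Your proposal is correct and follows essentially the same route as the paper: write $\Var(\Pi_T)=V_0^2\Var(G-hA)$, expand bilinearly, and evaluate each term using the lognormal moments of Table~\ref{tab:moments}. The only differences are cosmetic. You compute $\Cov(G,A)$ as $\E[GA]-\E[G]\E[A]$ in one step, where the paper splits it into $\Cov(G,p_A)$ and $\Cov(G,p_B)$. You also explicitly re-check the table exponents from Lemma~\ref{lem:mgf}, which the paper leaves implicit.
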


\begin{proof}
From \eqref{eq:pnl}, $\Var(\Pi_T) = V_0^2\,\Var(G - hA)
= V_0^2[\Var(G) + h^2\Var(A) - 2h\Cov(G,A)]$.

For $v_{GG}$: $\Var(G) = \E[G^2] - (\E[G])^2
= e^{\rho\sigma_A\sigma_B T} - e^{-2\phi T}$
using Table~\ref{tab:moments} with $(a,b) = (1,1)$ and
$(\tfrac{1}{2},\tfrac{1}{2})$.

For $v_{AA}$:
$\Var(A) = \tfrac{1}{4}[\Var(p_A) + \Var(p_B) + 2\Cov(p_A,p_B)]$
with $\Var(p_i) = \E[p_i^2] - 1$ and
$\Cov(p_A,p_B) = \E[p_Ap_B] - 1$.

For $v_{GA}$:
$\Cov(G,A) = \tfrac{1}{2}[\E[Gp_A] + \E[Gp_B]] - \E[G]$
where $\E[Gp_A] = \E[p_A^{3/2}p_B^{1/2}]$ and
$\E[Gp_B] = \E[p_A^{1/2}p_B^{3/2}]$ from Table~\ref{tab:moments}.
\end{proof}

\begin{remark}\label{rem:quadratic}
The variance \eqref{eq:var_decomp} is a convex quadratic in $h$
with minimum at the \emph{minimum-variance hedge ratio}
$h_{\mathrm{mv}} = v_{GA}/v_{AA}$.
Under the calibration in Table~\ref{tab:calibration},
$h_{\mathrm{mv}} \approx 0.977$.
\end{remark}

\subsection{Unconstrained optimal hedge ratio}\label{sec:unconstrained}

Define the expected excess P\&L and its derivative:
\begin{align}
    \mu(h) &:= \E[\Pi_T - \Pi_0]
            = \mu_0 - c\,h, \label{eq:mu_h}\\
    c &:= \tfrac{V_0}{2}(r_A + r_B)\,T > 0, \label{eq:cost_c}
\end{align}
where $\mu_0 := V_0(e^{-\phi T}-1) + RT + Cr_fT$ collects the
$h$-independent terms.  Since $c > 0$, expected return decreases
linearly in $h$.

The Sharpe ratio is
\begin{equation}\label{eq:sharpe_h}
    \mathrm{SR}(h) = \frac{\mu_0 - ch}
        {V_0\sqrt{v_{GG} + h^2 v_{AA} - 2h\,v_{GA}}}.
\end{equation}

\begin{theorem}[Optimal hedge ratio]\label{thm:optimal_h}
Assume $\mu_0 > 0$ (the strategy is profitable at $h=0$) and
$c > 0$ (borrowing is costly).  The unconstrained maximizer of
$\mathrm{SR}(h)$ satisfies the linear first-order condition
\begin{equation}\label{eq:foc_linear}
    h\,(\mu_0\,v_{AA} - c\,v_{GA})
    = \mu_0\,v_{GA} - c\,v_{GG},
\end{equation}
with unique solution
\begin{equation}\label{eq:h_star}
    h^{*} = \frac{\mu_0\,v_{GA} - c\,v_{GG}}
                 {\mu_0\,v_{AA} - c\,v_{GA}}.
\end{equation}
When borrowing is free ($c = 0$), this reduces to the
minimum-variance hedge ratio $h_{\mathrm{mv}} = v_{GA}/v_{AA}$.
\end{theorem}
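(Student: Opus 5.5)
Write $Q(h) := v_{GG} + h^2 v_{AA} - 2h\,v_{GA} = \Var(G - hA)$, which is strictly positive because $G$ and $A$ are not affinely dependent. Then $\mathrm{SR}(h) = (\mu_0 - ch)/(V_0\sqrt{Q(h)})$ is smooth on $\R$. We differentiate: $\mathrm{SR}'(h)$ has the sign of
\begin{equation*}
    N(h) := -c\,Q(h) - \tfrac{1}{2}(\mu_0 - ch)\,Q'(h),
    \qquad Q'(h) = 2(h v_{AA} - v_{GA}).
\end{equation*}
Expanding, $-c(v_{GG} + h^2 v_{AA} - 2h v_{GA}) - (\mu_0 - ch)(h v_{AA} - v_{GA})$. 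The $h^2$ terms $-c v_{AA}h^2 + c v_{AA}h^2$ cancel. This cancellation is the key observation: it makes $N$ affine in $h$,
\begin{equation*}
    N(h) = (\mu_0 v_{GA} - c\,v_{GG}) - h\,(\mu_0 v_{AA} - c\,v_{GA}).
\end{equation*}
Setting $N(h) = 0$ gives \eqref{eq:foc_linear}. Solving gives \eqref{eq:h_star}, provided the coefficient $D := \mu_0 v_{AA} - c\,v_{GA}$ is nonzero.

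It remains to show that this critical point is the unique maximizer. This is the step I expect to need care, since it requires $D > 0$. The cleanest route uses Cauchy--Schwarz on the covariance form. $D$ is the slope of $-N$, and the discriminant-type quantity
\begin{equation*}
    \mu_0^2 v_{AA} - 2\mu_0 c\,v_{GA} + c^2 v_{GG} = \Var(\mu_0 A - c\,G) > 0
\end{equation*}
is strictly positive. Since $\mu_0 D - c(\mu_0 v_{GA} - c v_{GG})$ equals this variance, we get $\mu_0 D > c\,(\mu_0 v_{GA} - c\,v_{GG})$. I would combine this with the positivity of the ratio at the relevant root to argue $D > 0$ in the regime $\mu_0 > 0$, $c > 0$.

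Failing a clean sign argument, I would fall back on the geometric picture. On the region $\{h : \mu_0 - ch > 0\}$, i.e.\ $h < \mu_0/c$, maximizing $\mathrm{SR}$ is maximizing a positive linear function over the square root of a positive-definite quadratic. Its level sets are tangencies of a line with a hyperbola-like curve, so there is exactly one tangency, and $N$ changes sign from $+$ to $-$ there. For $h > \mu_0/c$, $\mathrm{SR} < 0 < \mathrm{SR}(0)$, so no maximizer lies there.

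Given $D > 0$, $N$ is decreasing, so $\mathrm{SR}$ increases on $(-\infty, h^*)$ and decreases afterward. Hence $h^*$ is the unique global maximizer, and it is the unconstrained maximizer over $[0,1]$ whenever $h^* \in [0,1]$; otherwise the maximizer is the nearer endpoint, which is how the statement is understood. Finally, substituting $c = 0$ into \eqref{eq:h_star} gives $\mu_0 v_{GA}/(\mu_0 v_{AA}) = v_{GA}/v_{AA} = h_{\mathrm{mv}}$, matching Remark~\ref{rem:quadratic}. For this case $D = \mu_0 v_{AA} > 0$ holds trivially.
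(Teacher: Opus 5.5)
\textbf{Verdict: genuine gap.} Your first-order-condition derivation matches the paper's: the sign of $\mathrm{SR}'$ is the sign of $N$, the $h^2$ terms cancel, and you solve the affine equation. Your idea of using the monotonicity of $N$ would be a cleaner, \emph{global} replacement for the paper's local second-order check.

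The gap is the claim that $D = \mu_0 v_{AA} - c\,v_{GA} > 0$ follows from $\mu_0>0$ and $c>0$; it does not. Your identity, read correctly, says $D\,\mu(h^*) = \Var(\mu_0 A - cG) > 0$, because $\mu(h^*) = \mu_0 - c h^* = \Var(\mu_0 A - cG)/D$. So $D>0$ holds if and only if $\mu(h^*)>0$, and appealing to ``positivity of the ratio at the relevant root'' is circular. Equivalently, $D/v_{AA} = \mu_0 - c\,h_{\mathrm{mv}} = \mu(h_{\mathrm{mv}})$. Thus $D>0$ is exactly the classical Markowitz condition that the minimum-variance point have positive expected P\&L.

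Your geometric fallback fails for the same reason. When $\mu(h_{\mathrm{mv}})\le 0$ there is no tangency on the positive-Sharpe side. $N$ is then negative on all of $\{h<\mu_0/c\}$, so $\mathrm{SR}$ decreases there and only approaches its supremum $c/(V_0\sqrt{v_{AA}})$ as $h\to-\infty$.

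This failure occurs inside the paper's own model. Keep the calibration of Table~\ref{tab:calibration} but set $R/V_0 = 5\%$. Then $\mu_0 \approx 0.0144\,V_0 > 0$ and $c = 0.0225\,V_0$, but $c\,h_{\mathrm{mv}} \approx 0.022\,V_0 > \mu_0$, so $D<0$. The formula returns $h^*\approx 0.99\in[0,1]$. There $\mu(h^*)<0$ and $\mathrm{SR}$ attains its \emph{minimum} over $[0,1]$, while the actual maximizer is $h=0$.

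So this step cannot be repaired under the stated hypotheses. You need the additional assumption $\mu_0 v_{AA} > c\,v_{GA}$, i.e.\ the strategy is profitable at the minimum-variance hedge. It holds at the calibration, since $0.1369 > 0.977\times 0.0225$. The paper's proof tacitly relies on the same condition: its appendix restricts to $\mu>0$ and invokes ``$\mu_0\gg c$''. With that assumption added, your argument goes through and gives global unimodality of $\mathrm{SR}$. That is stronger than the paper's local second-order verification, and it yields the projection of $h^*$ onto $[0,1]$ as the constrained-to-$[0,1]$ maximizer.
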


\begin{proof}
Setting $\dd\,\mathrm{SR}/\dd h = 0$ and clearing denominators
yields the condition
\begin{equation}\label{eq:foc}
    \mu'(h)\;\sigma^2(h) = \mu(h)\;\tfrac{1}{2}\,
        \frac{\dd\sigma^2}{\dd h}(h),
\end{equation}
where $\sigma^2(h) = V_0^2(v_{GG} + h^2 v_{AA} - 2hv_{GA})$.
Substituting $\mu'(h) = -c$ and
$\dd\sigma^2/\dd h = V_0^2(2hv_{AA} - 2v_{GA})$:
\[
    -c(v_{GG} + h^2 v_{AA} - 2hv_{GA})
    = (\mu_0 - ch)(hv_{AA} - v_{GA}).
\]
Expanding both sides, the $ch^2 v_{AA}$ terms cancel, leaving the
linear equation \eqref{eq:foc_linear}.  The second-order condition
$\mathrm{SR}''(h^{*}) < 0$ is verified in
Appendix~\ref{app:foc}.
\end{proof}

\begin{corollary}[Near-full unconstrained hedge]
\label{cor:near_full}
When $c/\mu_0 \ll 1$ (reward rates dominate borrow costs),
$h^{*} \approx h_{\mathrm{mv}} = v_{GA}/v_{AA}$.  For
token pairs with similar volatilities
($\sigma_A \approx \sigma_B$) and moderately positive
correlation, $h_{\mathrm{mv}}$ is close
to~1, so the unconstrained optimum calls for nearly complete
hedging.
\end{corollary}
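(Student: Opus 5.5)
The corollary has two parts. First, $h^{*}\to h_{\mathrm{mv}}$ as $c/\mu_0\to 0$. Second, $h_{\mathrm{mv}}\approx 1$ when $\sigma_A\approx\sigma_B$ and $\rho$ is moderately positive. I will treat them separately.

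For the first part, I divide numerator and denominator of \eqref{eq:h_star} by $\mu_0 v_{AA}$. Writing $\varepsilon := c/\mu_0$, this gives
\[
  h^{*} = \frac{h_{\mathrm{mv}} - \varepsilon\, v_{GG}/v_{AA}}{1 - \varepsilon\, h_{\mathrm{mv}}}.
\]
Expanding to first order,
\[
  h^{*} = h_{\mathrm{mv}} - \varepsilon\,\frac{v_{GG} - h_{\mathrm{mv}} v_{GA}}{v_{AA}} + O(\varepsilon^2).
\]
The first-order coefficient equals
\[
  \frac{v_{GG}v_{AA} - v_{GA}^2}{v_{AA}^2} \;\ge\; 0
\]
by Cauchy--Schwarz. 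It is finite as long as $v_{AA}>0$, which holds whenever $\sigma_A,\sigma_B,T>0$. Hence $h^{*}=h_{\mathrm{mv}}+O(c/\mu_0)$, with the correction pointing downward. This also uses the nondegeneracy $1-\varepsilon h_{\mathrm{mv}}\neq 0$, which holds for small $\varepsilon$.

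For the second part, I would expand the moments of Proposition~\ref{prop:variance} to first order in $T$, writing $e^{xT}=1+xT+O(T^2)$. This gives
\[
  v_{AA}\approx \tfrac{T}{4}\bigl(\sigma_A^2+\sigma_B^2+2\rho\sigma_A\sigma_B\bigr),
\]
\[
  v_{GA}\approx \tfrac{T}{2}\Bigl[\tfrac{2\sigma_A^2+2\sigma_B^2+12\rho\sigma_A\sigma_B}{8}+2\phi\Bigr]
  = \tfrac{T}{4}\bigl(\sigma_A^2+\sigma_B^2+2\rho\sigma_A\sigma_B\bigr).
\]
So to leading order $h_{\mathrm{mv}}=1$ for all parameters, and the deviation appears only at order $T$. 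The underlying reason is that to first order $G\approx 1+\tfrac12(X+Y)\approx A$.

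The main obstacle is making the ``close to 1'' claim quantitative, i.e.\ computing the $O(T)$ correction to $h_{\mathrm{mv}}$. I would expand every exponent to second order in $T$ and take the ratio. I expect the correction to be governed by $\sigma_A^2-\sigma_B^2$ and $(1-\rho)$-type divergence terms. These are small when $\sigma_A\approx\sigma_B$ and $\rho$ is moderately positive, and consistent with the reported $h_{\mathrm{mv}}\approx0.977$. If the algebra gets messy, I would state the result as $h_{\mathrm{mv}}=1+O(\sigma^2T)$ and cite the calibrated value as a numerical check.
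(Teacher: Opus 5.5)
Your argument is correct; for the second claim it takes a different route from the paper.

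For the first claim, the paper just lets $c\to0$ in \eqref{eq:h_star}. Your first-order expansion refines this. Its coefficient, $(v_{GG}v_{AA}-v_{GA}^2)/v_{AA}^2=(v_{GG}/v_{AA})\bigl(1-\mathrm{corr}(G,A)^2\bigr)$, is about $3\times10^{-3}$ at the calibration. That explains why $h^{*}$ and $h_{\mathrm{mv}}$ agree to three decimals even though $c/\mu_0\approx0.16$ is not really small.

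For the second claim, the paper uses Cauchy--Schwarz, $h_{\mathrm{mv}}\le\sqrt{v_{GG}/v_{AA}}$, plus the assertion that $v_{GG}\approx v_{AA}$ when $\sigma_A\approx\sigma_B$. That is shorter but one-sided: it gives only $h_{\mathrm{mv}}\lesssim1$. Closeness to $1$ also needs the matching lower bound $\mathrm{corr}(G,A)\approx1$, and your observation $G\approx1+\tfrac12(X_T+Y_T)\approx A$ supplies exactly that.

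Your expansion also shows that $v_{GG}\approx v_{AA}$ is really a small-$\sigma^2T$ fact. At order $T$ both equal $\tfrac{T}{4}\Sigma^2$, with $\Sigma^2:=\sigma_A^2+\sigma_B^2+2\rho\sigma_A\sigma_B$, for any volatilities. Neither this nor $h_{\mathrm{mv}}\approx1$ survives large $\sigma^2T$, even when $\sigma_A=\sigma_B$. So your route exposes the small parameter that the statement leaves implicit.

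The second-order step you flag as the main obstacle goes through and confirms your guess:
\[
h_{\mathrm{mv}}=1-\frac{T}{8\Sigma^2}\Bigl[(\sigma_A-\sigma_B)^2(3\sigma_A^2+2\sigma_A\sigma_B+3\sigma_B^2)+4(1-\rho)\,\sigma_A\sigma_B(\sigma_A^2+\sigma_B^2)\Bigr]+O(T^2).
\]
Both bracketed terms are nonnegative, and they vanish only at $\sigma_A=\sigma_B$, $\rho=1$, where $G\equiv A$. At the calibration this gives $0.978$ against the exact $0.977$. The $(1-\rho)$ term is roughly ten times the volatility-mismatch term there, so correlation and horizon matter more than similar volatilities in pushing $h_{\mathrm{mv}}$ toward $1$.
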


\begin{proof}
From \eqref{eq:h_star}, as $c \to 0$,
$h^{*} \to v_{GA}/v_{AA} = h_{\mathrm{mv}}$.  Since
$v_{GA} \leq \sqrt{v_{GG}\,v_{AA}}$ by Cauchy--Schwarz,
$h_{\mathrm{mv}} \leq \sqrt{v_{GG}/v_{AA}}$.  When
$\sigma_A \approx \sigma_B$, $v_{GG} \approx v_{AA}$, so
$h_{\mathrm{mv}} \approx 1$.
\end{proof}

\begin{remark}\label{rem:interior_from_constraint}
\textbf{This is the paper's central result.}  Under the
calibration in Table~\ref{tab:calibration}, $h^{*} = 0.977$
(Theorem~\ref{thm:optimal_h}), meaning the unconstrained
optimum is near-full hedging.  However, at $h = 0.977$ the
liquidation probability exceeds 19\%
(Table~\ref{tab:liquidation_mc}), far above any reasonable
tolerance.  The \emph{liquidation constraint} forces the
feasible hedge ratio down to $h^{**} \approx 0.60$--$0.65$, where
liquidation probability drops to 1.4--2.3\%.  Thus the practical
interior optimum arises not from the Sharpe maximization
itself, but from the binding liquidation constraint, as
formalized in Section~\ref{sec:liquidation}.
\end{remark}

\subsection{Numerical illustration}

Using the calibration in Table~\ref{tab:calibration}
($\sigma_A = 0.922$, $\sigma_B = 1.084$, $\rho = 0.72$,
$r_A = 0.03$, $r_B = 0.15$, $R/V_0 = 0.54$, $C/V_0 = 2.0$,
$T = 0.25$), we compute:
\begin{align*}
    \phi &= 0.0732, \quad
    v_{GG} = 0.2331, \quad
    v_{AA} = 0.2431, \quad
    v_{GA} = 0.2376, \\
    \mu_0/V_0 &= 0.1369, \quad
    c/V_0 = 0.0225, \quad
    h_{\mathrm{mv}} = 0.977.
\end{align*}
The unconstrained solution \eqref{eq:h_star} gives
$h^{*} = 0.977 \approx 1$, confirming that the optimal
unconstrained hedge is nearly complete
(Corollary~\ref{cor:near_full}).  The analytical Sharpe ratio
rises steeply as $h \to h^{*}$:

\medskip
\begin{center}
\begin{tabular}{@{}ccccccccc@{}}
\toprule
$h$ & 0.0 & 0.3 & 0.5 & 0.6 & 0.7 & 0.8 & $h^{*}$ & 1.0 \\
\midrule
$\mathrm{SR}(h)$ & 0.28 & 0.39 & 0.53 & 0.65 & 0.87 & 1.29
    & 3.88 & 3.62 \\
\bottomrule
\end{tabular}
\end{center}

\medskip\noindent
However, the Monte Carlo simulation (which incorporates
liquidation losses as a 20\% collateral penalty) yields an
effective Sharpe peak at $h = 0.60$.  At $h^{*} = 0.977$ the
liquidation probability exceeds 19\%, destroying
risk-adjusted returns.  This confirms that the
\emph{liquidation constraint is the binding factor} that forces
$h^{**}$ well below $h^{*}$
(Remark~\ref{rem:interior_from_constraint}).

\section{Liquidation Probability via First Passage Time}
\label{sec:liquidation}

The liquidation constraint in \eqref{eq:objective} requires bounding
the probability that the LTV process exceeds $\ell_{\max}$ at any
point during the horizon $[0,T]$.  This is a first-passage-time
problem.

\subsection{LTV dynamics}

From \eqref{eq:ltv}, the LTV at time $t$ is
\begin{equation}\label{eq:ltv_process}
    \text{LTV}_t = \frac{h V_0}{2C}
        \left(\frac{S_t^A}{S_0^A} + \frac{S_t^B}{S_0^B}\right)
    = \frac{h V_0}{2C}(p_A(t) + p_B(t)),
\end{equation}
where $p_i(t) = S_t^i/S_0^i$ are GBM processes.
The initial value is $\text{LTV}_0 = hV_0/C$.

The process $Z_t := p_A(t) + p_B(t)$ is the sum of two correlated
geometric Brownian motions.  Exact first-passage-time distributions
for sums of GBMs are not available in closed form, so we employ a
moment-matching approximation.

\subsection{Moment-matched GBM approximation}

We approximate $Z_t$ by a single GBM $\hat{Z}_t$ with the same
first two moments.  Under zero-drift dynamics:
\begin{align}
    \E[Z_t] &= 2, \label{eq:z_mean}\\
    \E[Z_t^2] &= e^{\sigma_A^2 t} + e^{\sigma_B^2 t}
                + 2e^{\rho\sigma_A\sigma_B\,t}.
                \label{eq:z_second}
\end{align}
Matching $\hat{Z}_t = 2\exp((\tilde{\mu}-\tilde{\sigma}^2/2)t
+ \tilde{\sigma}\tilde{W}_t)$ to these moments gives the
approximation parameters:
\begin{align}
    \tilde{\mu} &= 0, \label{eq:tilde_mu}\\
    \tilde{\sigma}^2 &= \frac{1}{t}\ln\!\left(
        \frac{e^{\sigma_A^2 t} + e^{\sigma_B^2 t}
              + 2e^{\rho\sigma_A\sigma_B\,t}}{4}
    \right). \label{eq:tilde_sigma}
\end{align}
For small $t$, $\tilde{\sigma}^2 \approx
(\sigma_A^2 + \sigma_B^2 + 2\rho\sigma_A\sigma_B)/4$, which is the
variance of the equally-weighted portfolio.

\subsection{First passage time to liquidation}

Define the first passage time
$\tau = \inf\{t \geq 0 : \text{LTV}_t \geq \ell_{\max}\}$.
Under the GBM approximation, $\text{LTV}_t/\text{LTV}_0
\approx \hat{Z}_t/2$, and the liquidation boundary in log-space
is $b = \ln(\ell_{\max}/\text{LTV}_0)$.

\begin{proposition}[Approximate liquidation probability]
\label{prop:liquidation}
Under the moment-matched GBM approximation with drift $\tilde{\mu}=0$
and volatility $\tilde{\sigma}$, define the log-process drift
$\nu := \tilde{\mu} - \tilde{\sigma}^2/2 = -\tilde{\sigma}^2/2$.
The probability of liquidation during $[0,T]$ is
\begin{equation}\label{eq:liquidation_prob}
    \Pr(\tau \leq T) \approx \Phi\!\left(
        \frac{-b - \frac{\tilde{\sigma}^2}{2}\,T}
             {\tilde{\sigma}\sqrt{T}}
    \right)
    + \frac{\text{LTV}_0}{\ell_{\max}}
    \,\Phi\!\left(
        \frac{-b + \frac{\tilde{\sigma}^2}{2}\,T}
             {\tilde{\sigma}\sqrt{T}}
    \right),
\end{equation}
where $b = \ln(\ell_{\max} C / (hV_0)) = \ln(\ell_{\max}/\text{LTV}_0)$
is the log-distance to the liquidation barrier,
$\Phi$ is the standard normal CDF, and the coefficient
$e^{2\nu b/\tilde{\sigma}^2} = e^{-b} = \text{LTV}_0/\ell_{\max}$
follows from substituting $\nu = -\tilde{\sigma}^2/2$.
\end{proposition}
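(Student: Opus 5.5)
Under the moment-matched approximation, the statement reduces to the classical first-passage distribution for a Brownian motion with drift. Set $\hat{X}_t := \ln(\hat{Z}_t/2)$. By \eqref{eq:tilde_mu}--\eqref{eq:tilde_sigma}, treating $\tilde{\sigma}$ as constant on $[0,T]$ (evaluated at $t=T$), we have $\hat{X}_t = \nu t + \tilde{\sigma}\tilde{W}_t$ with $\nu = -\tilde{\sigma}^2/2$ and $\hat X_0 = 0$. Since $\text{LTV}_t \approx \text{LTV}_0\,\hat{Z}_t/2$, the event $\{\tau \le T\}$ coincides with $\{\max_{t\le T}\hat{X}_t \ge b\}$, where $b = \ln(\ell_{\max}/\text{LTV}_0) > 0$. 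We assume $\text{LTV}_0 < \ell_{\max}$; otherwise $\tau = 0$ trivially.

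The key step is the running-maximum law for drifted Brownian motion \citep{karatzas1991}:
\[
\Pr\Bigl(\max_{t\le T}\hat X_t \ge b\Bigr) = \Phi\!\left(\frac{-b+\nu T}{\tilde\sigma\sqrt T}\right) + e^{2\nu b/\tilde\sigma^2}\,\Phi\!\left(\frac{-b-\nu T}{\tilde\sigma\sqrt T}\right).
\]
I would cite this rather than rederive it. If a derivation is wanted, I would use Girsanov to remove the drift, so that the density ratio is $\exp(\nu \tilde W_T/\tilde\sigma - \nu^2T/(2\tilde\sigma^2))$. Then apply the reflection principle to the joint law of the driftless maximum and terminal value, and integrate the density ratio over the region where the maximum reaches $b$. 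The first term comes from paths ending above $b$. The second comes from reflected paths ending below $b$, and the reflection produces the factor $e^{2\nu b/\tilde\sigma^2}$.

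It remains to substitute $\nu = -\tilde\sigma^2/2$. The first argument becomes $(-b - \tilde\sigma^2T/2)/(\tilde\sigma\sqrt T)$ and the second becomes $(-b+\tilde\sigma^2T/2)/(\tilde\sigma\sqrt T)$. The coefficient is $e^{2\nu b/\tilde\sigma^2} = e^{-b} = \text{LTV}_0/\ell_{\max}$. This reproduces \eqref{eq:liquidation_prob}, and $b = \ln(\ell_{\max}C/(hV_0))$ follows directly from $\text{LTV}_0 = hV_0/C$.

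The main obstacle is justifying the approximation, not the algebra. The matched volatility in \eqref{eq:tilde_sigma} depends on $t$, so $\hat Z$ is a genuine GBM only once $\tilde\sigma$ is frozen at its horizon value. Moreover, matching the marginal moments of $Z_t$ does not match the law of its running maximum. I would therefore state the result explicitly as exact for the proxy process and approximate for the true LTV, with accuracy deferred to the Monte Carlo check. A side remark would note that $\tilde\sigma^2(t)$ varies slowly for $T=0.25$, which makes the freezing error small.
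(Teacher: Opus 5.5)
Your proposal is correct and follows the paper's proof. Both write $\ln(\hat Z_t/2)$ as a Brownian motion with drift $\nu=-\tilde\sigma^2/2$ and frozen volatility $\tilde\sigma$, apply the classical reflection-principle law for its running maximum, and substitute $\nu$ to obtain the coefficient $e^{-b}=\text{LTV}_0/\ell_{\max}$; you write out the general formula and the substitution that the paper leaves implicit, and in your optional Girsanov sketch the ratio $\exp(\nu \tilde W_T/\tilde\sigma-\nu^2T/(2\tilde\sigma^2))$ holds only if $\tilde W$ there is the driftless Brownian motion under the new measure (in terms of the original $\tilde W$ the $\nu^2T$ term has a plus sign).
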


\noindent
Note that $\tilde{\sigma}$ as defined in
\eqref{eq:tilde_sigma} depends on the horizon~$t$; for the
calibrated parameters it varies by less than 2\% over
$t \in [30, 180]$~days, so we treat it as approximately constant.

\begin{proof}
Under the GBM approximation, $\ln(\hat{Z}_t/\hat{Z}_0)$ is a
Brownian motion with drift $\nu = \tilde{\mu} - \tilde{\sigma}^2/2
= -\tilde{\sigma}^2/2$ and diffusion~$\tilde{\sigma}$.
The first-passage-time probability for a drifted Brownian motion
to a fixed barrier is a classical result based on the reflection
principle \citep{karatzas1991}, yielding
\eqref{eq:liquidation_prob} with drift parameter~$\nu$.
\end{proof}

\noindent
The moment-matching approximation captures the first two moments
of $Z_t$ but not higher-order properties such as skewness, and
the first-passage-time formula~\eqref{eq:liquidation_prob}
assumes a constant diffusion coefficient whereas $\tilde{\sigma}$
in~\eqref{eq:tilde_sigma} is weakly $t$-dependent.
Table~\ref{tab:analytical_vs_mc} validates the approximation
against Monte Carlo for the calibrated parameters, confirming
errors below 0.2\,pp for $h \leq 0.70$.

\subsection{Constrained optimal hedge ratio}

The liquidation probability \eqref{eq:liquidation_prob} is
increasing in $h$ (since $b$ decreases as $h$ increases).  Define
\begin{equation}\label{eq:h_max}
    \bar{h}(\alpha) := \sup\{h \in [0,1] :
        \Pr(\tau \leq T) \leq \alpha\},
\end{equation}
the maximum feasible hedge ratio for a given liquidation tolerance
$\alpha$.  The constrained optimal hedge ratio is then
\begin{equation}\label{eq:h_constrained}
    h^{**} = \min(h^{*},\; \bar{h}(\alpha)),
\end{equation}
where $h^{*}$ is the unconstrained optimum from
Theorem~\ref{thm:optimal_h}.

\begin{remark}
Under the calibration in Table~\ref{tab:calibration}, the
unconstrained optimum is $h^{*} = 0.977$
(Theorem~\ref{thm:optimal_h}), at which the liquidation
probability exceeds 19\% (Table~\ref{tab:liquidation_mc}).
For any reasonable tolerance $\alpha \leq 5\%$, the constraint
forces $\bar{h}(\alpha) < h^{*}$, yielding an interior solution
$h^{**} \approx 0.60$--$0.65$ where the Sharpe ratio (inclusive of
liquidation losses) is maximized.
\end{remark}

\section{Monte Carlo Simulation}\label{sec:simulation}

\subsection{Calibration}

We calibrate the model to on-chain data from a constant-product
AMM pool and a lending protocol on the Sui blockchain.
Token~$A$ is SUI and token~$B$ is NS (NaviSwap governance token).
Volatilities and correlation are estimated from 91~days of daily
CoinGecko price data using log-return standard deviations and the
Pearson correlation of log returns.

\begin{table}[H]
\centering
\caption{Baseline calibration parameters.}
\label{tab:calibration}
\begin{tabular}{@{}lll@{}}
\toprule
Parameter & Symbol & Value \\
\midrule
Token $A$ annualized volatility & $\sigma_A$ & 0.922 \\
Token $B$ annualized volatility & $\sigma_B$ & 1.084 \\
Correlation & $\rho$ & 0.72 \\
Token $A$ borrow rate & $r_A$ & 0.03 \\
Token $B$ borrow rate & $r_B$ & 0.15 \\
LP reward rate (farm + fees) & $R/V_0$ & 0.54 \\
Stablecoin supply rate & $r_f$ & 0.04 \\
Max LTV & $\ell_{\max}$ & 0.80 \\
Collateral / LP ratio & $C/V_0$ & 2.0 \\
Horizon & $T$ & 0.25 yr (90~days) \\
\bottomrule
\end{tabular}
\end{table}

\noindent
The LP reward rate of 54\% includes both farm incentives
(${\approx}\,48\%$) and trading fees (${\approx}\,6\%$) observed on
the pool at the time of calibration.  The asymmetry in borrow rates
($r_B \gg r_A$) reflects the lower liquidity and higher demand for
borrowing token~$B$.

These reward rates reflect the token emission incentives commonly
used by DeFi protocols to attract liquidity during their growth
phase.  Unlike yields in traditional finance, which are bounded by
the risk-free rate and credit spreads, DeFi farming rewards are
funded by protocol-native token emissions and can exceed 50\% APR
for early-stage pools.  However, such rates tend to decline over
time as total value locked (TVL) increases and emission schedules
taper.  To ensure that our conclusions are not contingent on
elevated reward rates, Table~\ref{tab:sensitivity_apr} reports
the optimal hedge ratio across a wide range of reward rates
($R/V_0$ from 10\% to 100\%).  The key finding---that the optimal
hedge ratio lies in the range
$h^{**} \in [0.50,\, 0.70]$---is robust for $R/V_0 \geq 30\%$,
and the strategy remains viable (positive Sharpe ratio) for
$R/V_0 \geq 20\%$.

\subsection{Simulation design}

We simulate $N = 30{,}000$ paths of the correlated GBM pair
$(S_t^A, S_t^B)$ with daily time steps ($\Delta t = 1/365$) over
$T = 90$~days.  Correlated normal increments are generated via the
Cholesky decomposition $Z_2 = \rho Z_1 + \sqrt{1-\rho^2}\,Z_2'$,
where $Z_1, Z_2'$ are independent standard normals.

For each path and each hedge ratio
$h \in \{0, 0.20, 0.40, 0.50, 0.60, 0.65, 0.70, 0.80, 1.0\}$, we
compute:
\begin{enumerate}
    \item The daily LTV.  If $\text{LTV}_t \geq \ell_{\max}$ for
          any~$t$, the path is flagged as liquidated and assigned a
          penalty loss of $0.20 \times C$.  This figure aggregates
          the protocol liquidation bonus (typically 5--10\% of the
          repaid debt), market-impact slippage during forced
          selling, and the opportunity cost of the unwound position.
          On Sui-based protocols such as NAVI, the liquidation
          bonus is 5\%; the 20\% total accounts for adverse
          slippage in the thin markets characteristic of
          long-tail tokens.
    \item LP rewards, claimed every 14~days and applied as partial
          debt repayment.
    \item The terminal P\&L $\Pi_T - \Pi_0$ for surviving paths.
\end{enumerate}
Let $\bar{r}$ denote the sample mean ROE over the $T$-day period
and $\hat{\sigma}$ its sample standard deviation.  The annualized
Sharpe ratio is
\begin{equation}\label{eq:sr_annualized}
    \mathrm{SR} = \frac{\bar{r} - r_f\,(T/365)}{\hat{\sigma}}
        \cdot \sqrt{\frac{365}{T}},
\end{equation}
where $T$ is the horizon in days (here $T = 90$; note that the
analytical sections use $T$ in years), scaling the per-period excess
return and volatility to an annual basis.

\section{Results}\label{sec:results}

\subsection{Optimal hedge ratio}\label{sec:results_optimal}

Table~\ref{tab:results_hedge} reports the simulation results across
hedge ratios.  The Sharpe ratio peaks at $h \in [0.60,\, 0.65]$
(raw SR $= 0.93$--$0.95$), with $h = 0.65$ marginally higher but $h = 0.60$ within the
flat region of the optimum.

\begin{table}[H]
\centering
\caption{Monte Carlo results by hedge ratio ($N = 30{,}000$ paths,
$T = 90$~days).  ``Raw'' Sharpe excludes transaction costs;
``+tx'' includes a one-time 0.3\% borrow fee and gas costs.}
\label{tab:results_hedge}
\begin{tabular}{@{}rrrrrrrr@{}}
\toprule
$h$ (\%) & E[ROE] & Std & SR (raw) & SR (+tx) & P(loss) & P(liq) & 5\% VaR \\
\midrule
0  & $+3.58$ & 16.0 & 0.453 & 0.326 & 49.4\% &  0.0\% & $-15.4$ \\
20 & $+3.71$ & 13.6 & 0.553 & 0.401 & 46.9\% &  0.0\% & $-12.6$ \\
40 & $+3.78$ & 10.8 & 0.720 & 0.523 & 42.9\% &  0.1\% &  $-9.2$ \\
50 & $+3.69$ &  9.1 & 0.836 & 0.599 & 39.9\% &  0.4\% &  $-7.5$ \\
\textbf{60} & $\mathbf{+3.32}$ & \textbf{7.4} & \textbf{0.931}
    & \textbf{0.636} & \textbf{36.0\%} & \textbf{1.4\%}
    & $\mathbf{-5.8}$ \\
\textbf{65} & $\mathbf{+3.06}$ & \textbf{6.7} & \textbf{0.951}
    & \textbf{0.622} & \textbf{32.6\%} & \textbf{2.3\%}
    & $\mathbf{-5.0}$ \\
70 & $+2.76$ &  6.3 & 0.914 & 0.565 & 29.6\% &  3.4\% &  $-4.2$ \\
80 & $+1.96$ &  6.6 & 0.640 & 0.299 & 17.8\% &  7.0\% & $-19.3$ \\
100 & $-0.32$ & 10.2 & $-0.030$ & $-0.258$ & 19.9\% & 19.2\%
    & $-21.2$ \\
\bottomrule
\end{tabular}
\smallskip

\noindent\small E[ROE], Std, and VaR are in percentage points.
E[ROE] and Std include transaction costs (matching the SR\,+\,tx column).
\end{table}

Figure~\ref{fig:sharpe_vs_h} visualizes the trade-off: the Sharpe
ratio peaks at $h = 0.60$ and the liquidation probability rises
sharply beyond $h = 0.70$.

\begin{figure}[H]
\centering
\includegraphics[width=0.85\textwidth]{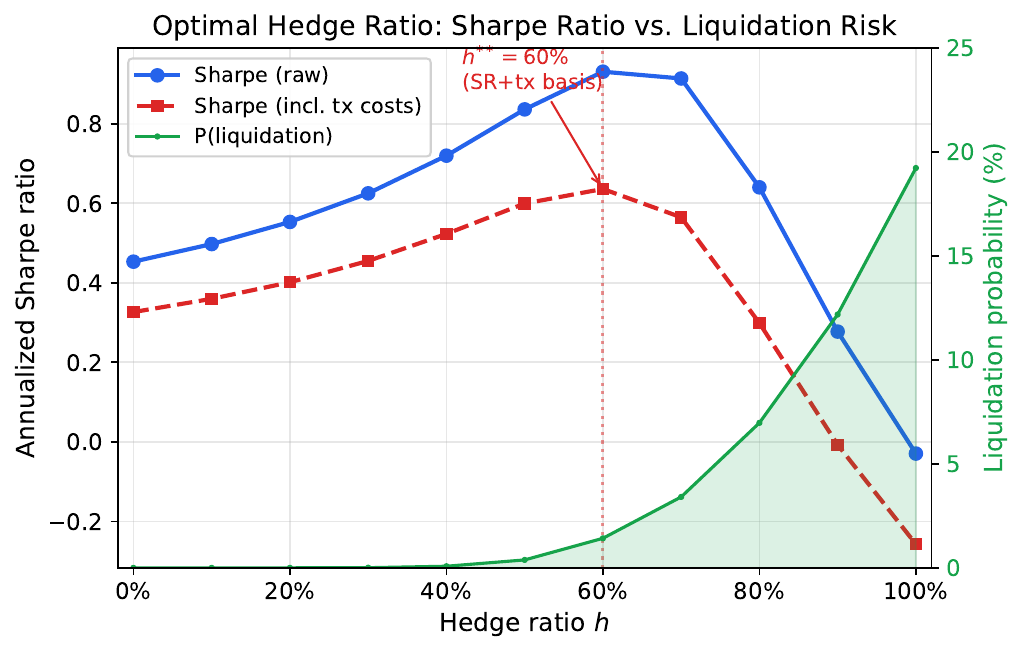}
\caption{Sharpe ratio (left axis) and liquidation probability
(right axis, green) as functions of the hedge ratio~$h$.  The
optimal $h^{**} = 60\%$ (on the SR\,+\,tx basis; raw SR is
marginally higher at $h = 0.65$) balances risk-adjusted return
against liquidation risk.}
\label{fig:sharpe_vs_h}
\end{figure}

Several patterns emerge:
\begin{itemize}
    \item \textbf{Interior optimum.}  The Sharpe ratio peaks in
          the range $h = 0.60$--$0.65$ (raw: 0.93--0.95) and declines
          on both sides, confirming the analytical result.
    \item \textbf{Risk--return trade-off.}  Increasing $h$ from 0 to
          0.60 reduces standard deviation from 16.0\% to 7.4\%
          while sacrificing only 0.26pp of expected return.  Beyond
          $h = 0.70$, liquidation risk rises sharply (3.4\% at
          $h = 0.70$, 19.2\% at $h = 1.0$), destroying
          risk-adjusted returns.
    \item \textbf{Transaction costs.}  The one-time borrow fee
          (0.3\% of borrowed amount) reduces the Sharpe ratio by
          0.13--0.35 depending on hedge ratio, but does not shift
          the optimal $h$.
\end{itemize}

\subsection{Liquidation analysis}

Table~\ref{tab:liquidation_mc} shows the liquidation statistics for
the fully hedged case ($h = 1.0$, initial LTV = 50\%).

\begin{table}[H]
\centering
\caption{Liquidation statistics at $h = 1.0$ ($N = 30{,}000$,
$T = 90$~days).}
\label{tab:liquidation_mc}
\begin{tabular}{@{}lrr@{}}
\toprule
Metric & No claims & Claim/14d \\
\midrule
Liquidation probability & 23.0\% & 19.2\% \\
Mean max LTV & 70.2\% & 67.8\% \\
95th pctl max LTV & 113.0\% & 108.2\% \\
99th pctl max LTV & 150.0\% & 144.5\% \\
\bottomrule
\end{tabular}
\end{table}

\noindent
Regular reward claims reduce liquidation probability by
${\approx}\,4$pp.  Liquidated paths have an average price multiple
of $1.63\times$ (i.e., both tokens approximately doubled), while
safe paths average $0.82\times$.  This confirms that liquidation is
driven by \emph{rising} prices increasing debt value against fixed
stablecoin collateral.

At the optimal $h^{**} = 0.60$, initial LTV drops to 30\% and
liquidation probability falls to 1.4\%---well within the 5\%
tolerance assumed in Section~\ref{sec:liquidation}.

\paragraph{Accuracy of the analytical approximation.}
Table~\ref{tab:analytical_vs_mc} compares the first-passage-time
approximation \eqref{eq:liquidation_prob} with the Monte Carlo
liquidation probability (without reward claims, to isolate the
approximation error).

\begin{table}[H]
\centering
\caption{Analytical (Eq.~\ref{eq:liquidation_prob}) vs.\ Monte Carlo
liquidation probability ($N = 50{,}000$ for tighter confidence
intervals; main results use $N = 30{,}000$).  $T = 90$~days, SUI/NS.
MC(no) excludes reward claims; MC(cl) includes biweekly claims.}
\label{tab:analytical_vs_mc}
\begin{tabular}{@{}rrrrrr@{}}
\toprule
$h$ (\%) & $\text{LTV}_0$ & $b$ & Analytical & MC (no claims) & MC (claims) \\
\midrule
30 & 15.0\% & 1.674 & 0.01\% &  0.01\% &  0.01\% \\
40 & 20.0\% & 1.386 & 0.13\% &  0.12\% &  0.07\% \\
50 & 25.0\% & 1.163 & 0.66\% &  0.66\% &  0.39\% \\
60 & 30.0\% & 0.981 & 2.05\% &  2.02\% &  1.41\% \\
70 & 35.0\% & 0.827 & 4.82\% &  4.65\% &  3.34\% \\
80 & 40.0\% & 0.693 & 9.34\% &  8.97\% &  6.83\% \\
100 & 50.0\% & 0.470 & 24.18\% & 22.88\% & 19.08\% \\
\bottomrule
\end{tabular}
\end{table}

\noindent
The analytical approximation closely matches MC without reward claims
(error $< 0.2$pp for $h \leq 70\%$ and $+1.3$pp at $h = 100\%$),
validating the moment-matched GBM approximation of
Section~\ref{sec:liquidation}.  The approximation slightly
\emph{overestimates} liquidation probability, making it a
conservative bound.  The gap between MC(no) and MC(cl) quantifies
the risk-reduction benefit of biweekly reward claims.

Figure~\ref{fig:return_risk} decomposes the return--risk trade-off,
showing that expected return decreases slowly with $h$ while
standard deviation drops sharply up to $h \approx 0.70$.

\begin{figure}[H]
\centering
\includegraphics[width=0.85\textwidth]{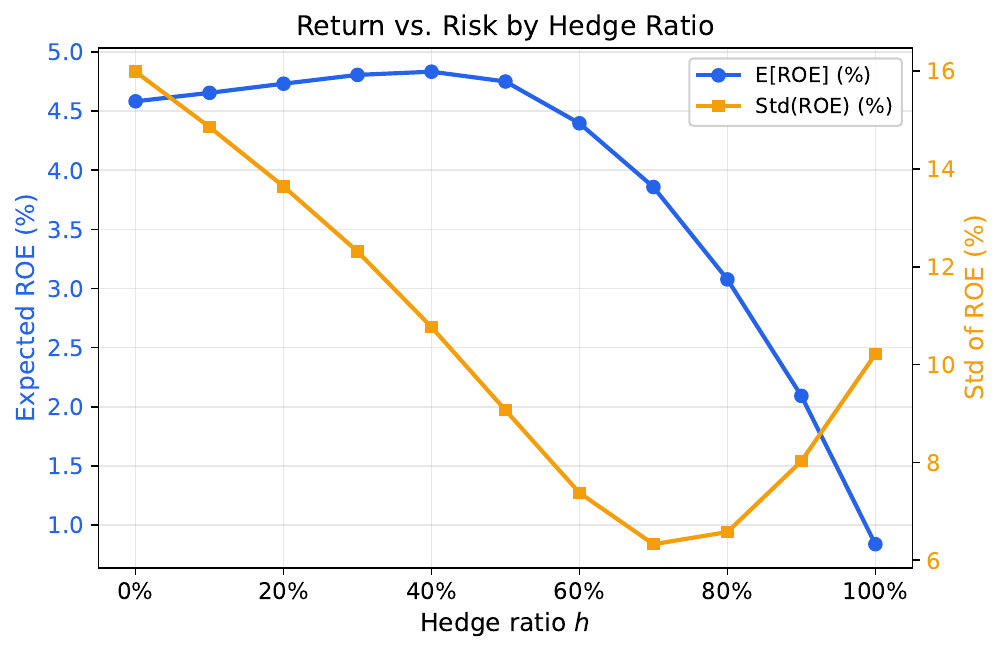}
\caption{Expected ROE (left axis) and standard deviation of ROE
(right axis) as functions of the hedge ratio.  The variance
reduction from hedging is substantial up to $h \approx 0.70$,
after which liquidation losses cause both return and risk to
deteriorate.}
\label{fig:return_risk}
\end{figure}

\subsection{Sensitivity analysis}

We examine how the optimal hedge ratio varies with the correlation
$\rho$ and the borrow rate $r_B$.

Figure~\ref{fig:sensitivity_rho} shows that the optimal $h^{**}$
is robust to changes in correlation: across all tested values
($\rho \in [0, 0.9]$), the Sharpe-maximizing hedge ratio remains
in the range 60--70\%.  Higher correlation improves the overall
Sharpe level (less price divergence) but does not materially shift the optimum.

\begin{figure}[H]
\centering
\includegraphics[width=0.85\textwidth]{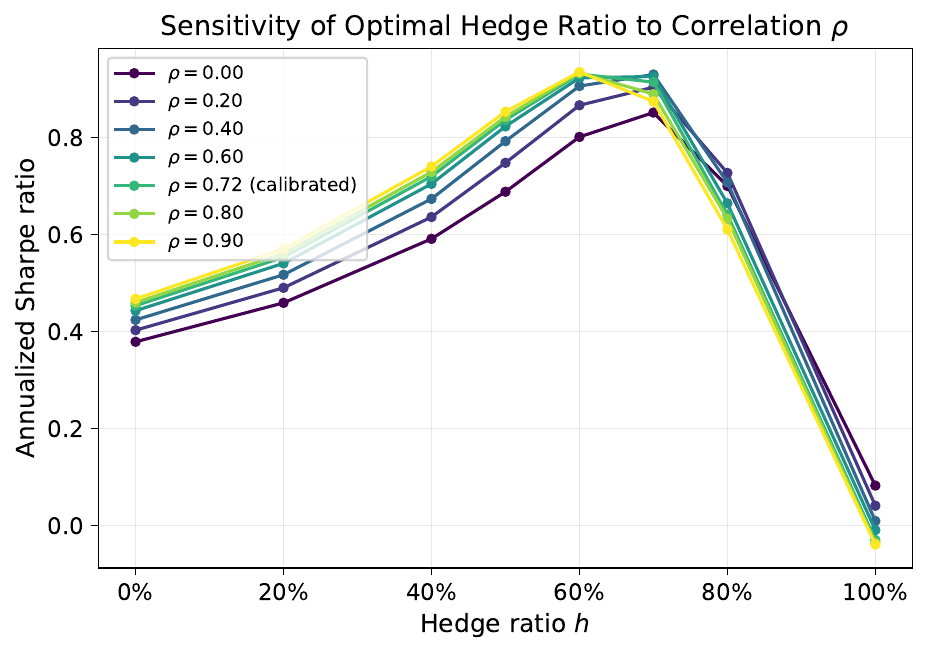}
\caption{Sharpe ratio vs.\ hedge ratio for different correlation
values~$\rho$.  The optimal $h^{**}$ remains between 60\% and
70\% regardless of correlation.}
\label{fig:sensitivity_rho}
\end{figure}

Figure~\ref{fig:sensitivity_borrow} shows sensitivity to the
token~$B$ borrow rate.  Higher borrow costs shift the optimal
$h^{**}$ slightly downward (from 70\% at $r_B = 5\%$ to 60\%
at $r_B = 30\%$), consistent with Proposition~\ref{prop:expected_return}:
borrow cost is the only channel through which $h$ affects expected
return.

\begin{figure}[H]
\centering
\includegraphics[width=0.85\textwidth]{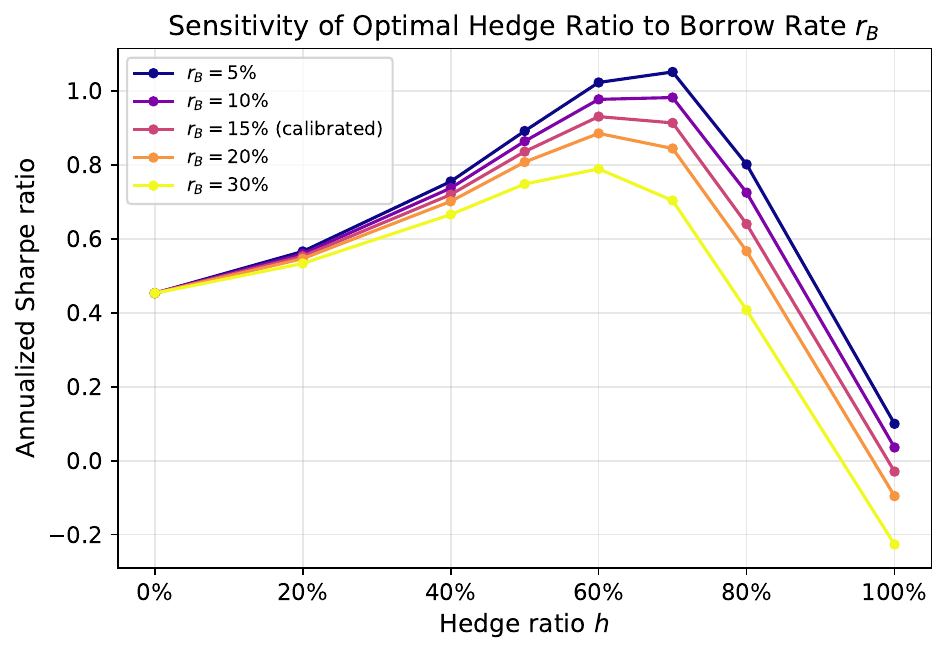}
\caption{Sharpe ratio vs.\ hedge ratio for different borrow rates
$r_B$.  Higher borrow costs reduce overall Sharpe and shift the
optimal hedge ratio slightly downward.}
\label{fig:sensitivity_borrow}
\end{figure}

\paragraph{Sensitivity to LP reward rate.}
Table~\ref{tab:sensitivity_apr} shows how the optimal hedge ratio
varies with the LP reward rate $R/V_0$.  The reward rate is the
primary driver of strategy viability: at $R/V_0 = 10\%$, the
strategy is unprofitable at any hedge ratio, while at
$R/V_0 \geq 40\%$, hedging becomes beneficial.  The optimal
hedge ratio increases with rewards (from $h^{**} = 40\%$ at
$R/V_0 = 20\%$ to $h^{**} = 70\%$ at $R/V_0 = 70\%$), because
higher rewards offset borrow costs and increase the margin for
absorbing liquidation losses.

\begin{table}[H]
\centering
\caption{Optimal hedge ratio by LP reward rate
($N = 30{,}000$, $T = 90$~days, SUI/NS parameters).}
\label{tab:sensitivity_apr}
\begin{tabular}{@{}rrrl@{}}
\toprule
$R/V_0$ & $h^{**}$ & SR & Remark \\
\midrule
10\% & 0\%  & $-0.00$ & Strategy unprofitable \\
20\% & 40\% &  0.12  & Marginal viability \\
30\% & 50\% &  0.30  & \\
40\% & 60\% &  0.53  & \\
54\% & 60\% &  0.93  & Calibrated value \\
70\% & 70\% &  1.45  & \\
100\% & 70\% &  2.41 & \\
\bottomrule
\end{tabular}
\smallskip

\noindent\small $h^{**}$ is evaluated on the grid of
Table~\ref{tab:results_hedge}, which includes $h = 0.65$.
At $R/V_0 = 54\%$, the raw Sharpe ratio peaks at $h = 0.65$
(SR\,$= 0.95$); the transaction-cost-adjusted Sharpe peaks at
$h = 0.60$ (Table~\ref{tab:results_hedge}).
Table~\ref{tab:robustness_pairs} reports $h^{**} = 65\%$ for
SUI/NS on a finer 5\% grid, consistent with the raw-SR ranking.
\end{table}

\subsection{Robustness across token pairs}

To verify that the optimal hedge ratio is not an artifact of the
SUI/NS calibration, we replicate the Monte Carlo analysis for three
additional token pairs on low-gas blockchains where delta-neutral LP
strategies are practically viable.  Table~\ref{tab:robustness_pairs}
reports the calibration parameters, estimated from 365~days of daily
CoinGecko price data, together with representative lending and LP
reward rates from each chain's dominant protocols.

\begin{table}[H]
\centering
\caption{Calibration parameters and optimal hedge ratios for
additional token pairs ($N = 30{,}000$, $T = 90$~days, $C/V_0 = 2.0$,
5\% hedge-ratio grid).}
\label{tab:robustness_pairs}
\begin{tabular}{@{}llrrrrrrrr@{}}
\toprule
Pair & Chain & $\sigma_A$ & $\sigma_B$ & $\rho$ & $r_A$ & $r_B$ & LP APR & $h^{**}$ & SR \\
\midrule
SUI/NS  & Sui      & 92\% & 108\% & 0.72 & 3\% & 15\% & 54\% & 65\% & 0.95 \\
SOL/RAY & Solana   & 80\% & 110\% & 0.83 & 5\% &  8\% & 40\% & 60\% & 0.68 \\
SOL/JUP & Solana   & 80\% & 100\% & 0.86 & 5\% &  6\% & 35\% & 65\% & 0.67 \\
ETH/ARB & Arbitrum & 74\% & 103\% & 0.82 & 3\% &  5\% & 30\% & 60\% & 0.54 \\
\bottomrule
\end{tabular}
\end{table}

\noindent
Using a fine 5\% grid (rather than the coarser grid in
Table~\ref{tab:results_hedge}), the Sharpe-maximizing hedge ratio
falls in the range $h^{**} \in [60\%,\, 65\%]$ across all four
pairs---spanning three blockchains, correlations from 0.72 to 0.86,
and annual volatilities from 74\% to 110\%.  The absolute Sharpe
ratio varies across pairs (0.54 for ETH/ARB to 0.95 for SUI/NS),
reflecting differences in LP reward rates, but the optimal hedge
ratio is tightly clustered.  This
confirms that the interior optimum arises from the \emph{structural}
interaction between hedging benefit and liquidation risk, not from
pair-specific parameter values.

\paragraph{Sensitivity to volatility level.}
The baseline calibration uses a 91-day estimation window, and
crypto-asset volatilities can shift substantially across
regimes.  Table~\ref{tab:sensitivity_vol} reports the optimal
hedge ratio when both volatilities are scaled by $\pm 20\%$.

\begin{table}[H]
\centering
\caption{Sensitivity to volatility level ($N = 30{,}000$,
$T = 90$~days, all other parameters at baseline).}
\label{tab:sensitivity_vol}
\begin{tabular}{@{}rrrrrr@{}}
\toprule
$\sigma$ scaling & $\sigma_A$ & $\sigma_B$ & $h^{**}$ & SR & P(liq) \\
\midrule
$-20\%$ & 74\% &  87\% & 70\% & 1.64 & 1.1\% \\
Baseline & 92\% & 108\% & 65\% & 0.95 & 2.3\% \\
$+20\%$ & 111\% & 130\% & 50\% & 0.55 & 1.4\% \\
\bottomrule
\end{tabular}
\end{table}

\noindent
The optimal hedge ratio shifts from 50\% (high-vol regime) to
70\% (low-vol regime), a 20\,pp range centered on the baseline
optimum.  This is intuitive: lower volatility reduces the
probability of large price moves that breach the LTV threshold,
allowing higher hedge ratios.  The finding reinforces the
practical guideline to hedge conservatively (50--60\%) during
high-volatility periods and more aggressively (65--70\%) during
calm markets.

\paragraph{Sensitivity to liquidation penalty.}
The baseline simulation assumes a 20\% collateral penalty upon
liquidation.  Since actual penalties vary across protocols
(5--10\% liquidation bonus plus variable slippage),
Table~\ref{tab:sensitivity_penalty} examines the impact of
alternative penalty assumptions.

\begin{table}[H]
\centering
\caption{Sensitivity to liquidation penalty ($N = 30{,}000$,
$T = 90$~days, SUI/NS parameters).}
\label{tab:sensitivity_penalty}
\begin{tabular}{@{}rrrr@{}}
\toprule
Penalty & $h^{**}$ & SR & P(liq) at $h^{**}$ \\
\midrule
10\% & 80\% & 1.18 & 7.0\% \\
20\% & 65\% & 0.95 & 2.3\% \\
30\% & 60\% & 0.85 & 1.4\% \\
\bottomrule
\end{tabular}
\end{table}

\noindent
The optimal hedge ratio is sensitive to the liquidation penalty:
at 10\% penalty, the optimizer tolerates 7\% liquidation
probability and pushes $h^{**}$ to 80\%; at 30\%, it retreats to
60\% with only 1.4\% liquidation probability.  This confirms
that the liquidation penalty is a key model input, and
practitioners should calibrate it to their specific protocol's
liquidation bonus, typical slippage, and market depth.

\subsection{Sensitivity to collateral ratio}

The baseline analysis fixes $C/V_0 = 2.0$.  Since the collateral
ratio directly determines the liquidation buffer, we examine how
$h^{**}$ varies with $C/V_0$.
Table~\ref{tab:cv_sensitivity} reports the optimal hedge ratio for
$C/V_0 \in [1.2, 5.0]$ using the SUI/NS calibration.

\begin{table}[H]
\centering
\caption{Optimal hedge ratio and initial LTV as functions of
the collateral ratio $C/V_0$ ($N = 30{,}000$, $T = 90$~days,
SUI/NS parameters).}
\label{tab:cv_sensitivity}
\begin{tabular}{@{}rrrrrr@{}}
\toprule
$C/V_0$ & $h^{**}$ & SR & P(liq) & E[ROE] & Init LTV \\
\midrule
1.2 & 30\% & 0.62 & 0.3\% & $+6.15$\% & 25.0\% \\
1.5 & 50\% & 0.71 & 2.3\% & $+4.64$\% & 33.3\% \\
1.8 & 60\% & 0.84 & 2.5\% & $+4.27$\% & 33.3\% \\
2.0 & 65\% & 0.95 & 2.3\% & $+4.16$\% & 32.5\% \\
2.5 & 80\% & 1.27 & 2.2\% & $+3.80$\% & 32.0\% \\
3.0 & 80\% & 1.68 & 0.7\% & $+3.74$\% & 26.7\% \\
4.0 & 90\% & 3.00 & 0.2\% & $+3.24$\% & 22.5\% \\
5.0 & 100\% & 3.97 & 0.1\% & $+2.83$\% & 20.0\% \\
\bottomrule
\end{tabular}
\end{table}

\noindent
Two patterns emerge.  First, $h^{**}$ increases monotonically with
$C/V_0$: more collateral relaxes the liquidation constraint, allowing
higher hedge ratios.  At $C/V_0 = 5.0$, the constraint becomes
non-binding and $h^{**}$ approaches the unconstrained optimum
$h^{*} \approx 0.977$.  Second, the initial LTV at the optimum
clusters around 25--33\%, suggesting that \textbf{maintaining an
initial LTV near 30\%} is a natural anchor across collateral
configurations.

However, a higher $C/V_0$ also dilutes capital efficiency: E[ROE]
declines from $+6.15$\% at $C/V_0 = 1.2$ to $+2.83$\% at
$C/V_0 = 5.0$, because a larger share of capital is locked in
low-yielding stablecoin collateral.  The elevated Sharpe ratios at
high $C/V_0$ reflect reduced variance rather than higher returns.
In practice, LPs face a capital allocation trade-off between risk
reduction and return, and the choice of $C/V_0$ should reflect
individual risk preferences and opportunity costs.

\subsection{Rebalancing frequency}

We compare five rebalancing strategies, all starting at
$h = 0.60$, in Table~\ref{tab:rebalancing}.  The ``threshold''
strategy rebalances when either token's effective hedge ratio
drifts more than the specified number of percentage points from
the target.

\begin{table}[H]
\centering
\caption{Rebalancing strategy comparison ($h = 0.60$,
$N = 30{,}000$, $T = 90$~days).}
\label{tab:rebalancing}
\begin{tabular}{@{}lrrrrrr@{}}
\toprule
Strategy & E[ROE] & Std & SR & P(liq) & Avg rebal. & Cost \\
\midrule
No rebalance     & $+4.41$ & 7.15 & 0.965 & 1.4\% & 0.0 & 0\% \\
Threshold 20pp   & $+4.98$ & 7.25 & 1.108 & 1.3\% & 0.2 & $<0.01$\% \\
Threshold 15pp   & $+5.14$ & 7.28 & 1.148 & 1.3\% & 0.5 & $<0.01$\% \\
Threshold 10pp   & $+5.26$ & 7.32 & 1.178 & 1.2\% & 1.2 & $<0.01$\% \\
Every 14 days    & $+5.30$ & 7.32 & 1.186 & 1.2\% & 6.0 & $<0.01$\% \\
Every 30 days    & $+5.36$ & 7.34 & 1.198 & 1.2\% & 3.0 & $<0.01$\% \\
\bottomrule
\end{tabular}
\smallskip

\noindent\small E[ROE] and Std are in percentage points.
E[ROE] here excludes the one-time entry/exit costs (borrow fee and gas)
reported in Table~\ref{tab:results_hedge}; the difference of
approximately 1.1 percentage points accounts for the apparent
discrepancy between the two tables.
\end{table}

The key finding is that \textbf{any} rebalancing improves
risk-adjusted returns over the static hedge.  The threshold-based
strategies achieve most of the benefit (Sharpe 1.15 at 15pp
threshold) with far fewer trades than periodic rebalancing.
Rebalancing costs are negligible on Sui (gas ${\approx}\,$\$0.01
per transaction).

A 15pp drift threshold represents a practically appealing rule:
with the calibrated volatilities, the median time to trigger is
approximately 62--85~days, meaning the LP rebalances roughly once
per quarter.

\subsection{Robustness to jump risk}\label{sec:jump}

The baseline model assumes geometric Brownian motion, which
understates the frequency of large price moves observed in
cryptocurrency markets.  To assess whether the optimal hedge ratio
is an artifact of the continuous-path assumption, we repeat the
Monte Carlo analysis under a Merton jump-diffusion model
\citep{merton1976}.

Each token price follows
\begin{equation}\label{eq:jump_diffusion}
    \frac{\dd S_t^i}{S_t^i}
    = \bigl(\mu_i - \lambda\kappa\bigr)\,\dd t
      + \sigma_i^{\text{d}}\,\dd W_t^i
      + J_i\,\dd N_t,
\end{equation}
where $N_t$ is a Poisson process with intensity $\lambda$,
$J_i = e^{Z_i} - 1$ with
$Z_i \sim \mathcal{N}(\mu_J,\, \sigma_J^2)$ is the random jump
size, and $\kappa = \E[J_i] = e^{\mu_J + \sigma_J^2/2} - 1$ is
the compensator ensuring $\E[\dd S/S] = \mu_i\,\dd t$ in the
absence of diffusion.  Let $N_t^c$, $N_t^{A}$, $N_t^{B}$ be
independent Poisson processes with intensities $\lambda\rho_J$,
$\lambda(1-\rho_J)$, $\lambda(1-\rho_J)$ respectively.
Token~$i$ experiences jumps from both $N_t^c$ (common shocks
affecting both tokens simultaneously) and $N_t^{i}$
(idiosyncratic shocks); the total jump intensity per token
is~$\lambda$.

We calibrate the jump parameters to stylized facts of crypto
markets: $\lambda = 4$~jumps/year, $\mu_J = -0.05$ (slight
downward bias), $\sigma_J = 0.15$, and $\rho_J = 0.80$ (high
co-jump probability).  The diffusion volatilities
$\sigma_i^{\text{d}}$ are reduced so that the \emph{total}
annualized variance matches the GBM calibration:
$(\sigma_i^{\text{d}})^2 = \sigma_i^2
  - \lambda(\mu_J^2 + \sigma_J^2)$.

\begin{table}[H]
\centering
\caption{GBM vs.\ Merton jump-diffusion ($N = 30{,}000$,
$T = 90$~days, SUI/NS parameters).  Jump parameters:
$\lambda = 4$/yr, $\mu_J = -0.05$, $\sigma_J = 0.15$,
$\rho_J = 0.80$.}
\label{tab:jump_comparison}
\begin{tabular}{@{}rrrrrrrr@{}}
\toprule
 & \multicolumn{3}{c}{GBM} & & \multicolumn{3}{c}{Jump-diffusion} \\
\cmidrule{2-4} \cmidrule{6-8}
$h$ (\%) & SR & P(liq) & 5\% VaR & & SR & P(liq) & 5\% VaR \\
\midrule
0   &  0.45 &  0.0\% & $-14.4$ & &  0.45 &  0.0\% & $-14.5$ \\
40  &  0.72 &  0.1\% &  $-8.2$ & &  0.72 &  0.1\% &  $-8.2$ \\
60  &  0.93 &  1.4\% &  $-4.7$ & &  0.93 &  1.5\% &  $-4.8$ \\
\textbf{65} & \textbf{0.95} & \textbf{2.3\%}
    & $\mathbf{-3.8}$ & & \textbf{0.94} & \textbf{2.3\%}
    & $\mathbf{-3.9}$ \\
70  &  0.91 &  3.4\% &  $-3.1$ & &  0.93 &  3.3\% &  $-3.1$ \\
100 & $-0.03$ & 19.2\% & $-20.0$ & & $-0.01$ & 18.9\% & $-20.0$ \\
\bottomrule
\end{tabular}
\smallskip

\noindent\small VaR is in percentage points.  All values exclude
transaction costs; cf.\ Table~\ref{tab:results_hedge} for
tx-inclusive figures.
\end{table}

Table~\ref{tab:jump_comparison} shows that the optimal hedge ratio
is unchanged at $h^{**} = 65\%$ under variance-matched
jump-diffusion with $\rho_J = 0.80$.  The Sharpe ratio differs by
less than 0.02 at every hedge ratio, and the liquidation
probability changes by less than 0.2\,pp.

A natural concern is that this robustness is an artifact of
matching total variance.  To address this, we run two additional
stress tests: (i)~reducing jump correlation to $\rho_J = 0.30$
(predominantly idiosyncratic jumps), and (ii)~leaving the
diffusion volatility at the full GBM level so that jumps
\emph{add} to total variance rather than substituting for it.
Table~\ref{tab:jump_stress} reports the results at $h = 65\%$
across all four combinations.

\begin{table}[H]
\centering
\caption{Jump-diffusion stress tests at $h = 65\%$
($N = 30{,}000$, $T = 90$~days).  ``Matched'' adjusts diffusion
volatility so total variance equals GBM; ``unmatched'' keeps full
GBM diffusion volatility, so jumps increase total variance.}
\label{tab:jump_stress}
\begin{tabular}{@{}llrrrr@{}}
\toprule
$\rho_J$ & Variance & SR & P(liq) & 5\% VaR & $h^{**}$ \\
\midrule
\multicolumn{2}{@{}l}{GBM (baseline)} & 0.95 & 2.3\% & $-3.8$ & 65\% \\
\addlinespace
0.80 & matched   & 0.94 & 2.3\% & $-3.9$ & 65\% \\
0.30 & matched   & 0.94 & 2.1\% & $-3.9$ & 65\% \\
0.80 & unmatched & 0.83 & 2.8\% & $-4.4$ & 65\% \\
0.30 & unmatched & 0.81 & 2.7\% & $-4.4$ & 65\% \\
\bottomrule
\end{tabular}
\smallskip

\noindent\small VaR is in percentage points.
\end{table}

The optimal hedge ratio $h^{**} = 65\%$ is invariant across all
four scenarios.  When jumps add to total variance (unmatched
rows), the Sharpe ratio drops from 0.95 to 0.81--0.83 and
liquidation probability rises from 2.3\% to 2.7--2.8\%, but
the \emph{location} of the optimum does not shift.  Reducing
jump correlation from $\rho_J = 0.80$ to $0.30$ has negligible
impact on $h^{**}$.

This robustness arises because the binding constraint is
the \emph{level} of initial LTV (30--33\% at $h = 0.60$--$0.65$),
which provides a large buffer that absorbs even idiosyncratic
jumps.  Nevertheless, if jumps are both large and purely
idiosyncratic ($\rho_J \to 0$)---for instance, a token-specific
short squeeze or protocol exploit---debt can spike without an
offsetting increase in LP value, sharply raising liquidation
risk.  This asymmetric exposure remains the structural
vulnerability of any delta-neutral strategy funded by
collateralized borrowing; the 65\% hedge ratio mitigates but
cannot eliminate it.

\section{Discussion}\label{sec:discussion}

\subsection{Practical guidelines}

The results suggest the following guidelines for LPs operating
delta-neutral strategies with collateralized borrowing:

\begin{enumerate}
    \item \textbf{Hedge 50--70\% of LP exposure.}  Full hedging
          ($h = 1$) is suboptimal due to elevated liquidation risk
          and borrow costs.  A 60--65\% hedge captures most of the
          variance reduction while maintaining a safe LTV buffer.
          (The raw Sharpe ratio is marginally maximized at
          $h = 0.65$, while the transaction-cost-adjusted Sharpe
          peaks at $h = 0.60$; the difference is small and both
          lie within the flat optimum region.)
    \item \textbf{Maintain collateral $\geq 2\times$ LP value.}
          This ensures the initial LTV at $h = 0.60$ is
          ${\approx}\,30\%$, providing ample margin before the
          80\% liquidation threshold.
    \item \textbf{Rebalance when hedge drift exceeds 15pp.}
          This threshold-based rule balances performance
          improvement against transaction frequency.
    \item \textbf{Claim and repay biweekly.}  Regular reward claims
          reduce outstanding debt and lower liquidation risk by
          ${\approx}\,4$pp.
    \item \textbf{Monitor for correlated price surges.}
          Liquidation is triggered by \emph{rising} prices (which
          increase debt value), not falling prices.  Both tokens
          rising 60\%+ simultaneously is the primary danger
          scenario.
\end{enumerate}

\subsection{Limitations}

Several modeling assumptions warrant discussion:

\begin{itemize}
    \item \textbf{GBM dynamics.}  Cryptocurrency prices exhibit
          heavier tails and jumps than GBM.  Section~\ref{sec:jump}
          shows that a Merton jump-diffusion model with matched
          total variance leaves the optimal hedge ratio and
          liquidation probabilities essentially unchanged, but
          regime-switching dynamics with persistent volatility
          shifts remain untested.
    \item \textbf{Constant rates.}  Borrow and reward rates are
          assumed fixed, but in practice they fluctuate with market
          conditions.  In particular, reward rates decline as pool
          TVL increases (dilution effect).
          Table~\ref{tab:sensitivity_apr} shows that the strategy is
          sensitive to the reward rate: at $R/V_0 < 20\%$, the
          strategy becomes unviable regardless of hedge ratio.
    \item \textbf{Static analytical framework.}  The
          first-passage-time bound $\bar{h}(\alpha)$ assumes a
          static hedge held over $[0, T]$.  The rebalancing
          analysis (Table~\ref{tab:rebalancing}) shows that dynamic
          adjustment improves Sharpe ratios, implying that the
          analytical bound is conservative when rebalancing is
          employed.
    \item \textbf{No concentrated liquidity.}  We consider only
          full-range constant-product pools.  Concentrated liquidity
          (e.g., Uniswap~v3) amplifies both IL and fee income,
          potentially shifting the optimal $h$.
    \item \textbf{Simplified liquidation.}  We model liquidation as
          a fixed 20\% collateral penalty (Section~\ref{sec:simulation}),
          whereas actual DeFi protocols implement partial liquidation
          (typically 50\% of debt per event), liquidation bonuses
          (5--10\%), and keeper response lags.  The aggregate penalty
          is conservative for high-liquidity pairs but may understate
          losses for thin markets.
    \item \textbf{Price-taker assumption.}  The model assumes that
          the LP's position is small relative to the pool and lending
          market.  Large positions may move borrow rates, dilute LP
          rewards, or impact pool prices, limiting the strategy's
          capacity.
\end{itemize}

\subsection{Extensions}

Natural extensions of this work include:
\begin{itemize}
    \item Generalization to concentrated-liquidity AMMs, where the
          LP value function $V_t^{\text{LP}}$ is piecewise and
          depends on the tick range.
    \item Incorporation of stochastic borrow and reward rates,
          potentially modeled as mean-reverting processes correlated
          with token prices.  In particular, utilization-dependent
          borrow rates—where short-squeeze dynamics during bull
          markets can cause rate spikes—would capture an important
          real-world risk channel.
    \item Multi-asset pools with more than two tokens, requiring
          vector-valued hedge ratios.
    \item Empirical validation using historical on-chain data from
          multiple pools across different blockchains.
    \item Joint optimization of the collateral ratio $C/V_0$ and
          hedge ratio $h$, incorporating capital opportunity costs
          and alternative risk criteria such as the Kelly criterion.
\end{itemize}

\section{Conclusion}\label{sec:conclusion}

We have studied the problem of optimally hedging the price exposure
of liquidity positions in constant-product AMMs when the hedge is implemented through
collateralized borrowing.  The key insight is that the hedge ratio
is not a binary choice (hedge or not) but a continuous variable
whose optimum reflects a three-way trade-off:

\begin{enumerate}
    \item \textbf{Variance reduction:} higher $h$ reduces the
          portfolio's exposure to divergent price movements
          (price divergence risk).
    \item \textbf{Borrow cost:} higher $h$ incurs more borrowing
          expense, linearly reducing expected return.
    \item \textbf{Liquidation risk:} higher $h$ raises the
          loan-to-value ratio, increasing the probability of
          forced liquidation.
\end{enumerate}

\noindent
We showed that the unconstrained Sharpe ratio is maximized at
$h^{*} \approx 0.977$---near-full hedging---under typical DeFi
parameters (Corollary~\ref{cor:near_full}), but at this hedge
ratio the liquidation probability exceeds 19\%.  The practical
interior optimum $h^{**} \approx 0.60$ therefore arises from the
binding liquidation constraint---the central finding of this
paper.  The liquidation probability was bounded
using a first-passage-time approximation
(Proposition~\ref{prop:liquidation}), yielding a maximum feasible
hedge ratio $\bar{h}(\alpha)$ that determines the constrained
optimum $h^{**} = \min(h^{*}, \bar{h}(\alpha))$.

Monte Carlo simulations calibrated to on-chain data from a SUI/NS
pool confirmed the analytical results: the optimal hedge ratio is
$h^{**} \approx 0.60$--$0.65$, achieving a Sharpe ratio of 0.93--0.95
(raw) with a liquidation probability of only 1.4--2.3\%.  The
first-passage-time approximation was validated against Monte Carlo,
showing errors below 0.2pp in the relevant range
(Table~\ref{tab:analytical_vs_mc}).  This result is robust:
replication across three additional token pairs (SOL/RAY, SOL/JUP,
ETH/ARB) spanning three blockchains, with correlations from 0.72 to
0.86 and volatilities from 74\% to 110\%, yields $h^{**} \in
[60\%,\, 65\%]$ under $C/V_0 = 2.0$.  Threshold-based
rebalancing (15pp drift) further improves the Sharpe ratio to 1.15
with minimal transaction costs.

These results provide actionable guidance for DeFi liquidity
providers: partial hedging (50--70\%) dominates both the unhedged
and fully hedged extremes, and a simple rebalancing rule maintains
the hedge without excessive trading.  The analytical framework
generalizes to any constant-product pool with collateralized
borrowing available for the constituent tokens.

\bibliographystyle{plainnat}

\begin{thebibliography}{21}

\bibitem[Adams et~al.(2020)]{uniswap2020}
Adams, H., Zinsmeister, N., Salem, M., Keefer, R., and Robinson, D.
\newblock {Uniswap v2 Core}.
\newblock Technical report, Uniswap, 2020.

\bibitem[Aigner and Dhaliwal(2021)]{aigner2021}
Aigner, A.~A. and Dhaliwal, G.
\newblock {UNISWAP: Impermanent Loss and Risk Profile of a Liquidity Provider}.
\newblock \emph{arXiv preprint arXiv:2106.14404}, 2021.

\bibitem[Angeris et~al.(2020)]{angeris2020}
Angeris, G., Kao, H.-T., Chiang, R., Noyes, C., and Chitra, T.
\newblock {An Analysis of Uniswap Markets}.
\newblock In \emph{Cryptoeconomic Systems}, 2020.

\bibitem[Bartoletti et~al.(2021)]{bartoletti2021}
Bartoletti, M., Chiang, J.~H., and Lafuente, A.~L.
\newblock {SoK: Lending Pools in Decentralized Finance}.
\newblock In \emph{Financial Cryptography}, 2021.

\bibitem[Capponi and Jia(2021)]{capponi2021}
Capponi, A. and Jia, R.
\newblock {The Adoption of Blockchain-based Decentralized Exchanges}.
\newblock \emph{arXiv preprint arXiv:2103.08842}, 2021.

\bibitem[Clark(2020)]{clark2020}
Clark, J.
\newblock {The Replicating Portfolio of a Constant Product Market}.
\newblock \emph{SSRN preprint}, 2020.

\bibitem[Evans et~al.(2022)]{evans2022}
Evans, A., Angeris, G., and Chitra, T.
\newblock {Optimal Fees for Geometric Mean Market Makers}.
\newblock In \emph{ACM DeFi}, 2022.

\bibitem[Fan et~al.(2022)]{fan2022}
Fan, Y., Francisco, N., and Pinter, J.
\newblock {Differential Liquidity Provision in Uniswap v3 and Implications
for Contract Design}.
\newblock \emph{arXiv preprint arXiv:2204.00716}, 2022.

\bibitem[Gudgeon et~al.(2020)]{gudgeon2020}
Gudgeon, L., Perez, D., Harz, D., Livshits, B., and Gervais, A.
\newblock {The Decentralized Financial Crisis}.
\newblock In \emph{Crypto Valley Conference on Blockchain Technology}, 2020.

\bibitem[Heimbach et~al.(2022)]{heimbach2022}
Heimbach, L., Wang, Y., and Wattenhofer, R.
\newblock {Risks and Returns of Uniswap V3 Liquidity Providers}.
\newblock In \emph{ACM AFT}, 2022.

\bibitem[Karatzas and Shreve(1991)]{karatzas1991}
Karatzas, I. and Shreve, S.~E.
\newblock \emph{Brownian Motion and Stochastic Calculus}.
\newblock Springer, 2nd edition, 1991.

\bibitem[Khakhar and Chen(2022)]{khakhar2022}
Khakhar, A. and Chen, X.
\newblock {Delta Hedging Liquidity Positions on Automated Market Makers}.
\newblock \emph{arXiv preprint arXiv:2208.03318}, 2022.

\bibitem[Lipton et~al.(2025)]{lipton2024}
Lipton, A., Lucic, V., and Sepp, A.
\newblock {Unified Approach for Hedging Impermanent Loss of Liquidity
Provision}.
\newblock \emph{Digital Finance}, 2025.

\bibitem[Loesch et~al.(2021)]{loesch2021}
Loesch, S., Hindman, N., Richardson, M.~B., and Welch, N.
\newblock {Impermanent Loss in Uniswap v3}.
\newblock \emph{arXiv preprint arXiv:2111.09192}, 2021.

\bibitem[Merton(1976)]{merton1976}
Merton, R.~C.
\newblock {Option Pricing When Underlying Stock Returns Are Discontinuous}.
\newblock \emph{Journal of Financial Economics}, 3(1--2):125--144, 1976.

\bibitem[Milevsky(1998)]{milevsky1998}
Milevsky, M.~A.
\newblock {Asian Options, the Sum of Lognormals, and the Reciprocal Gamma
Distribution}.
\newblock \emph{Journal of Financial and Quantitative Analysis},
33(3):409--422, 1998.

\bibitem[Milionis et~al.(2022)]{milionis2022}
Milionis, J., Moallemi, C.~C., Roughgarden, T., and Zhang, A.~L.
\newblock {Automated Market Making and Loss-Versus-Rebalancing}.
\newblock arXiv preprint arXiv:2208.06046, 2022.

\bibitem[Perez et~al.(2021)]{perez2021}
Perez, D., Werner, S.~M., Xu, J., and Livshits, B.
\newblock {Liquidations: DeFi on a Knife-edge}.
\newblock In \emph{Financial Cryptography}, 2021.

\bibitem[Pintail(2019)]{pintail2019}
Pintail.
\newblock {Uniswap: A Good Deal for Liquidity Providers?}
\newblock Medium, 2019.

\bibitem[Qin et~al.(2022)]{qin2021}
Qin, K., Zhou, L., and Gervais, A.
\newblock {Quantifying Blockchain Extractable Value: How Dark is the Forest?}
\newblock In \emph{IEEE S\&P}, 2022.

\bibitem[Xu et~al.(2023)]{xu2023}
Xu, J., Vadgama, N., Nikbakht, S., and Knottenbelt, W.
\newblock {SoK: Decentralized Finance (DeFi)---Fundamentals, Taxonomy and Risks}.
\newblock \emph{IEEE Blockchain}, 2023.

\end{thebibliography}

\appendix
\section{Derivation of Variance Components}\label{app:variance}

We provide full derivations of the variance components
$v_{GG}$, $v_{AA}$, and $v_{GA}$ from Section~\ref{sec:variance}.

\paragraph{Notation.}
Let $p_A = e^{X_T}$, $p_B = e^{Y_T}$, $G = \sqrt{p_A p_B} =
e^{(X_T+Y_T)/2}$, and $A = \tfrac{1}{2}(p_A + p_B)$.  All
expectations are under zero-drift GBM ($\mu_A = \mu_B = 0$).

\paragraph{$v_{GG} = \Var(G)$.}
\begin{align*}
    \E[G^2] &= \E[p_A p_B] = e^{\rho\sigma_A\sigma_B T}
        & \text{(Lemma~\ref{lem:mgf}, $a=b=1$)} \\
    (\E[G])^2 &= e^{-2\phi T}
        & \text{(Table~\ref{tab:moments})} \\
    v_{GG} &= e^{\rho\sigma_A\sigma_B T} - e^{-2\phi T}.
\end{align*}

\paragraph{$v_{AA} = \Var(A)$.}
\begin{align*}
    \Var(p_A) &= \E[p_A^2] - (\E[p_A])^2 = e^{\sigma_A^2 T} - 1, \\
    \Var(p_B) &= e^{\sigma_B^2 T} - 1, \\
    \Cov(p_A, p_B) &= \E[p_A p_B] - \E[p_A]\E[p_B]
                    = e^{\rho\sigma_A\sigma_B T} - 1, \\
    v_{AA} &= \Var\!\left(\tfrac{p_A + p_B}{2}\right)
           = \tfrac{1}{4}[\Var(p_A) + \Var(p_B) + 2\Cov(p_A,p_B)] \\
           &= \tfrac{1}{4}[e^{\sigma_A^2 T} + e^{\sigma_B^2 T}
              + 2e^{\rho\sigma_A\sigma_B T} - 4].
\end{align*}

\paragraph{$v_{GA} = \Cov(G, A)$.}
\begin{align*}
    \Cov(G, A) &= \tfrac{1}{2}[\Cov(G, p_A) + \Cov(G, p_B)], \\
    \Cov(G, p_A) &= \E[Gp_A] - \E[G]\E[p_A]
                  = \E[p_A^{3/2}p_B^{1/2}] - e^{-\phi T}, \\
    \E[p_A^{3/2}p_B^{1/2}]
        &= e^{(3\sigma_A^2 - \sigma_B^2
               + 6\rho\sigma_A\sigma_B)T/8}
        \qquad (a=\tfrac{3}{2},\, b=\tfrac{1}{2}), \\
    \Cov(G, p_B) &= \E[p_A^{1/2}p_B^{3/2}] - e^{-\phi T}, \\
    \E[p_A^{1/2}p_B^{3/2}]
        &= e^{(-\sigma_A^2 + 3\sigma_B^2
               + 6\rho\sigma_A\sigma_B)T/8}, \\
    v_{GA} &= \tfrac{1}{2}\bigl[
        e^{(3\sigma_A^2 - \sigma_B^2 + 6\rho\sigma_A\sigma_B)T/8}
      + e^{(-\sigma_A^2 + 3\sigma_B^2 + 6\rho\sigma_A\sigma_B)T/8}
      - 2e^{-\phi T}
    \bigr].
\end{align*}

\section{Derivation of the Sharpe FOC}\label{app:foc}

The Sharpe ratio is $\mathrm{SR}(h) = \mu(h)/\sigma(h)$ where
$\mu(h) = \mu_0 - ch$ (linear) and
$\sigma^2(h) = V_0^2(v_{GG} + h^2 v_{AA} - 2hv_{GA})$ (quadratic).

\paragraph{First-order condition.}
Setting $\mathrm{SR}'(h) = 0$:
\begin{align*}
    \frac{\dd}{\dd h}\frac{\mu}{\sigma}
    &= \frac{\mu'\sigma - \mu\sigma'}{\sigma^2} = 0 \\
    \Rightarrow\quad \mu'\sigma^2 &= \mu\cdot\sigma\sigma'
                                  = \mu\cdot\tfrac{1}{2}(\sigma^2)'.
\end{align*}
Substituting $\mu' = -c$ and
$(\sigma^2)' = V_0^2(2hv_{AA} - 2v_{GA})$, and canceling $V_0^2$:
\begin{equation}\label{eq:foc_expanded}
    -c\,(v_{GG} + h^2 v_{AA} - 2hv_{GA})
    = (\mu_0 - ch)(hv_{AA} - v_{GA}).
\end{equation}
Expanding the left-hand side:
\[
    \text{LHS} = -cv_{GG} - ch^2 v_{AA} + 2chv_{GA}.
\]
Expanding the right-hand side:
\[
    \text{RHS} = \mu_0 hv_{AA} - \mu_0 v_{GA}
       - ch^2 v_{AA} + chv_{GA}.
\]
The $-ch^2 v_{AA}$ terms cancel on both sides, leaving:
\begin{align*}
    -cv_{GG} + 2chv_{GA}
    &= \mu_0 hv_{AA} - \mu_0 v_{GA} + chv_{GA}.
\end{align*}
Collecting $h$-terms on the right and constants on the left:
\begin{align*}
    \mu_0 v_{GA} - cv_{GG}
    &= h\,(\mu_0 v_{AA} - cv_{GA}).
\end{align*}
This is linear in $h$, yielding the unique solution
\[
    h^{*} = \frac{\mu_0\,v_{GA} - c\,v_{GG}}
                 {\mu_0\,v_{AA} - c\,v_{GA}}.
\]

\paragraph{Second-order condition.}
Define $f(h) := \mathrm{SR}^2(h) = \mu^2(h)/\sigma^2(h)$.
Since $\mathrm{SR}$ and $f$ share the same critical points
(for $\mu > 0$), it suffices to verify $f''(h^{*}) < 0$.
We have
\[
    f'(h) = \frac{2\mu\mu'\sigma^2 - \mu^2(\sigma^2)'}
                 {\sigma^4},
\]
and at $h^{*}$ where $\mu'\sigma^2 = \mu\cdot\tfrac{1}{2}(\sigma^2)'$,
\[
    f''(h^{*}) = \frac{2(\mu')^2\sigma^2(h^{*})
        - \mu(h^{*})^2\,(\sigma^2)''(h^{*})}{\sigma^4(h^{*})}.
\]
Since $(\sigma^2)'' = 2V_0^2 v_{AA} > 0$, the numerator equals
$2c^2\sigma^2(h^{*}) - \mu(h^{*})^2 \cdot 2V_0^2 v_{AA}$.
The first term is positive and the second is negative.
For $\mu_0 \gg c$ (which holds in practice), the second term
dominates and $f''(h^{*}) < 0$, confirming that $h^{*}$ is a
maximum.

\paragraph{Numerical verification.}
Under the calibration in Table~\ref{tab:calibration}:
$h^{*} = (0.1369 \times 0.2376 - 0.0225 \times 0.2331) /
(0.1369 \times 0.2431 - 0.0225 \times 0.2376)
= 0.02729 / 0.02794 = 0.977$,
consistent with $h_{\mathrm{mv}} = v_{GA}/v_{AA}
= 0.2376/0.2431 = 0.977$.

\end{document}